\pgfplotsset{compat=1.15}
\tikzset{
  fignode/.style={
    outer sep=0.25em,
  }
}
\tikzset{
  framedfignode/.style={
    outer sep=0.25em,
    inner sep=0.5em,
    rounded corners,
    draw,
  }
}
\newcommand{\bert}{\textsc{BERT}}
\newcommand{\ance}{\textsc{ANCE}}
\newcommand{\aggr}{\textsc{Aggretriever}}
\newcommand{\colbert}{\textsc{ColBERT}}
\newcommand{\colberter}{\textsc{ColBERTer}}
\newcommand{\tct}{\textsc{TCT-ColBERT}}
\newcommand{\cocondenser}{\textsc{coCondenser}}
\newcommand{\tildeone}{\textsc{TILDE}}
\newcommand{\tildetwo}{\textsc{TILDEv2}}
\newcommand{\splade}{\textsc{SPLADE}}
\newcommand{\dsplade}{\textsc{DistilSPLADE-max}}
\newcommand{\spade}{\textsc{SpaDE}}
\newcommand{\led}{\textsc{LED}}
\newcommand{\deepimpact}{\textsc{DeepImpact}}
\newcommand{\clear}{\textsc{CLEAR}}
\newcommand{\coil}{\textsc{COIL}}
\newcommand{\coilcr}{\textsc{COILcr}}
\newcommand{\bm}{\textsc{BM25}}
\newcommand{\deepct}{\textsc{DEEP-CT}}
\newcommand{\bertcls}{\textsc{BERT-CLS}}
\newcommand{\powerbert}{\textsc{PoWER-BERT}}
\newcommand{\skipbert}{\textsc{SkipBERT}}
\newcommand{\deebert}{\textsc{DeeBERT}}
\newcommand{\sparseretrieval}{\textsc{Sparse Retrieval}}
\newcommand{\denseretrieval}{\textsc{Dense Retrieval}}
\newcommand{\hybrid}{\textsc{Hybrid Retrieval}}
\newcommand{\reranking}{\textsc{Re-Ranking}}
\newcommand{\interpolatedreranking}{\textsc{Interpolation}}
\newcommand{\pyserini}{\textsc{Pyserini}}
\newcommand{\faiss}{\textsc{FAISS}}
\newcommand{\fastforward}{\textsc{Fast-Forward}}
\newcommand{\selbert}{\textsc{Selective BERT}}
\newcommand{\bertbase}{$\text{BERT}_\text{base}$}
\newcommand{\maxscore}{\textsc{MaxScore}}
\newcommand{\wand}{\textsc{WAND}}
\newcommand{\msmpsgdev}{\textsc{MSM-Psg-Dev}}
\newcommand{\trecdlpsgn}{\textsc{TREC-DL-Psg'19}}
\newcommand{\trecdlpsgt}{\textsc{TREC-DL-Psg'20}}
\newcommand{\trecdldocn}{\textsc{TREC-DL-Doc'19}}
\newcommand{\trecdldoct}{\textsc{TREC-DL-Doc'20}}
\newcommand{\beirmsm}{\textsc{MS MARCO}}
\newcommand{\beirfever}{\textsc{Fever}}
\newcommand{\beirfiqa}{\textsc{FiQA}}
\newcommand{\beirquora}{\textsc{Quora}}
\newcommand{\beirhpqa}{\textsc{HotpotQA}}
\newcommand{\beirdbp}{\textsc{DBpedia-Entity}}
\newcommand{\beirscifact}{\textsc{SciFact}}
\newcommand{\beirnfc}{\textsc{NFCorpus}}
\DeclareMathOperator*{\kargmax}{\textit{k}\text{-}argmax}
\newcommand{\mat}[1]{\mathbf{#1}}
\colorlet{plotColorNeutral}{gray}
\definecolor{plotColor1}{HTML}{e41a1c}
\definecolor{plotColor2}{HTML}{377eb8}
\definecolor{plotColor3}{HTML}{4daf4a}
\definecolor{plotColor4}{HTML}{984ea3}
\colorlet{plotColorNeutral*}{plotColorNeutral!60}
\colorlet{plotColor1*}{plotColor1!60}
\colorlet{plotColor2*}{plotColor2!60}
\colorlet{plotColor3*}{plotColor3!60}
\colorlet{plotColor4*}{plotColor4!60}
\pgfplotsset{
    colormap={greenred}{HTML=(4daf4a) HTML=(e41a1c)},
    colormap={redgreen}{HTML=(e41a1c) HTML=(4daf4a)}
}
\newtcbox{\gpu}[1][]{
    size=fbox,
    boxsep=1.5pt,
    on line,
    colframe=plotColor1!50,
    colback=plotColor1!50,
    enhanced,
    borderline={0.25pt}{0pt}{dashed},
    #1
}
\newtcbox{\cpu}[1][]{
    size=fbox,
    boxsep=1.5pt,
    on line,
    colframe=plotColor3!50,
    colback=plotColor3!50,
    enhanced,
    borderline={0.25pt}{0pt}{dotted},
    #1
}
\newcommand{\hide}[1]{\textcolor{gray}{#1}}
\newcommand{\tablearrow}{{} \rotatebox[origin=c]{180}{$\Lsh$} {}}
\newcommand{\midrulesep}{
    \arrayrulecolor{gray}
    \midrule[0.25pt]
    \arrayrulecolor{black}
}
\newcommand{\sigdef}[1]{{\scriptsize \texttt{[#1]}}}
\newcommand{\sigimpr}[1]{\scriptsize \textsuperscript{\texttt{[#1]}}}
\begin{document}
\title{Efficient Neural Ranking using Forward Indexes and Lightweight Encoders}

\author{Jurek Leonhardt}
\affiliation{
  \institution{Delft University of Technology}
  \city{Delft}
  \country{The Netherlands}
}
\email{L.J.Leonhardt@tudelft.nl}
\affiliation{
  \institution{L3S Research Center}
  \city{Hannover}
  \country{Germany}
}
\email{leonhardt@L3S.de}

\author{Henrik Müller}
\affiliation{
  \institution{L3S Research Center}
  \city{Hannover}
  \country{Germany}
}
\email{hmueller@L3S.de}

\author{Koustav Rudra}
\affiliation{
  \institution{Indian Institute of Technology Kharagpur}
  \country{India}
}
\email{krudra@cai.iitkgp.ac.in}

\author{Megha Khosla}
\affiliation{
  \institution{Delft University of Technology}
  \city{Delft}
  \country{Netherlands}
}
\email{M.Khosla@tudelft.nl}

\author{Abhijit Anand}
\affiliation{
  \institution{L3S Research Center}
  \city{Hannover}
  \country{Germany}
}
\email{aanand@L3S.de}

\author{Avishek Anand}
\affiliation{
  \institution{Delft University of Technology}
  \city{Delft}
  \country{Netherlands}
}
\email{avishek.anand@tudelft.nl}

\begin{abstract}
  Dual-encoder-based dense retrieval models have become the standard in IR. They employ large Transformer-based language models, which are notoriously inefficient in terms of resources and latency.

  We propose \fastforward{} indexes---vector forward indexes which exploit the semantic matching capabilities of dual-encoder models for efficient and effective re-ranking. Our framework enables re-ranking at very high retrieval depths and combines the merits of both lexical and semantic matching via score interpolation. Furthermore, in order to mitigate the limitations of dual-encoders, we tackle two main challenges: Firstly, we improve computational efficiency by either pre-computing representations, avoiding unnecessary computations altogether, or reducing the complexity of encoders. This allows us to considerably improve ranking efficiency and latency. Secondly, we optimize the memory footprint and maintenance cost of indexes; we propose two complementary techniques to reduce the index size and show that, by dynamically dropping irrelevant document tokens, the index maintenance efficiency can be improved substantially.

  We perform evaluation to show the effectiveness and efficiency of \fastforward{} indexes---our method has low latency and achieves competitive results without the need for hardware acceleration, such as GPUs.
\end{abstract}

\begin{CCSXML}
  <ccs2012>
  <concept>
  <concept_id>10002951.10003317.10003338</concept_id>
  <concept_desc>Information systems~Retrieval models and ranking</concept_desc>
  <concept_significance>500</concept_significance>
  </concept>
  <concept>
  <concept_id>10002951.10003317.10003365.10003366</concept_id>
  <concept_desc>Information systems~Search engine indexing</concept_desc>
  <concept_significance>300</concept_significance>
  </concept>
  <concept>
  <concept_id>10002951.10003317.10003365.10003367</concept_id>
  <concept_desc>Information systems~Search index compression</concept_desc>
  <concept_significance>300</concept_significance>
  </concept>
  </ccs2012>
\end{CCSXML}

\ccsdesc[500]{Information systems~Retrieval models and ranking}
\ccsdesc[300]{Information systems~Search engine indexing}
\ccsdesc[300]{Information systems~Search index compression}

\keywords{information retrieval, IR, ranking, dual-encoders, latency, efficiency}

\maketitle
\section{Introduction}
\label{sec:intro}
Neural rankers are typically based on large pre-trained language models, the most popular example being BERT~\cite{devlin2019bert}. Due to their architectural inductive bias (like self-attention units) and complexity, these models are able to capture the semantics of documents very well, mitigating the limitations of lexical retrievers. However, their capabilities come at a price, as the models commonly used often have upwards of hundreds of millions of parameters. This makes training and even inference without specialized hardware infeasible, and it is impossible to rank all documents in a large corpus in reasonable time. Furthermore, the resources required to run these models produce a considerable amount of emissions, creating a negative impact on the environment~\cite{scells2022reduce}.

There are two predominant approaches to deal with the inefficiency of neural ranking models. The first one, referred to as \emph{retrieve-and-re-rank}~\cite{simmons1965answering,gupta2018retrieve}, uses an efficient lexical retriever to obtain a candidate set of documents for the given query. The idea is to maximize the recall, i.e., capture most of the relevant documents, in the first stage. Afterwards, the second stage employs a complex neural ranker, which \emph{re-ranks} the documents in the candidate set, in order to promote the relevant documents to higher ranks. However, the retrieve-and-re-rank approach typically employs cross-attention re-rankers, which are expensive to compute even for a small set of candidate documents. This limits the first-stage retrieval depth, as low latency is essential for many applications (e.g., search engines).

The second approach skips the lexical retrieval step entirely and uses neural models for retrieval. The \emph{dual-encoder} architecture employs a \emph{query encoder} and a \emph{document encoder}, both of which are neural models which map their string inputs to dense representations in a common vector space. Retrieval is then performed as a $k$-nearest-neighbor ($k$NN) search operation to find the documents whose representations are most similar to the query. This is referred to as \emph{dense retrieval}~\cite{karpukhin2020dense}. Representing queries and documents independently means that most 
of the computationally expensive processing happens during the indexing stage, where document representations are pre-computed. However, dense retrieval is still slower than lexical retrieval and benefits from GPU acceleration, because the query needs to be encoded during the query-processing phase. Furthermore, we find that dense retrievers generally have lower recall than term-matching-based models at higher retrieval depths.

In this paper, we argue that neither of the two approaches is ideal. Instead, our first key idea is to explore the utility of dual-encoders in the re-ranking phase instead of the retrieval phase. Using dual-encoders in the re-ranking phase allows for a drastic reduction of query processing times and resource utilization (i.e., GPUs) during document encoding. Towards this, we first show that simple interpolation-based re-ranking that combines the benefits of lexical (computed using sparse retrieval) and semantic (computed using dual-encoders) similarity can result in competitive and sometimes better performance than using cross-attention. We propose a novel index structure called \fastforward{} indexes, which exploits the ability of dual-encoders to pre-compute document representations, in order to substantially improve the runtime efficiency of re-ranking. We empirically establish that dual-encoder models show great performance as re-rankers, even though they do not use cross-attention. 

Our second observation is that most current dual-encoder models use the same encoder for both documents and queries. While this design decision makes training easier, it also means that queries have to be encoded during runtime using a, potentially expensive, forward pass. We argue that this is suboptimal; rather, queries, which are often short and concise, do not require a complex encoder to compute their representations. We propose lightweight query encoders, some of which do not contain any self-attention layers, and show that they still perform well as re-rankers, while requiring only a fraction of the resources and time. In this work, we propose two families of lightweight query encoders to drastically reduce query-encoding costs without compromising ranking performance.

Lastly, we focus on the aspects of \emph{index footprint} and \emph{index maintenance}. Since dense indexes store the pre-computed representations of documents in the corpus, they exhibit much higher storage and memory requirements compared to sparse indexes~\cite{hofstatter2022are}. At the same time, maintaining the index, i.e., adding new documents, requires expensive forward passes of the document encoder. We propose two means of reducing the memory footprint: On the one hand, we propose \emph{sequential coalescing} to compress an index by reducing the number of vectors that need to be stored; on the other hand, we experiment with choosing a smaller number of dimensions, which reduces the size of each vector. Finally, we propose efficient document encoders, which dynamically drop irrelevant tokens prior to indexing using a very simple technique. 

Our research questions are as follows:
\begin{enumerate}
    \item[\bf RQ1] How suitable are dual-encoder models for interpolation-based re-ranking in terms of performance and efficiency?
    \item[\bf RQ2] Can the re-ranking efficiency be improved by limiting the number of \fastforward{} look-ups?
    \item[\bf RQ3] To what extent does query encoder complexity affect re-ranking performance?
    \item[\bf RQ4] What is the trade-off between \fastforward{} index size and ranking performance?
    \item[\bf RQ5] Can the indexing efficiency be improved by removing irrelevant document tokens?
\end{enumerate}

We conduct extensive experimentation on existing ranking benchmarks and find that dual-encoder models are very suitable for interpolation-based re-ranking and exhibit highly desirable performance and efficiency trade-offs. We show that, with further optimizations (\emph{early stopping}---cf.\ \cref{sec:ff_indexes.early_stopping}), re-ranking efficiency can be greatly improved by limiting the number of \fastforward{} look-ups. Additionally, we report a good trade-off between \fastforward{} index size and ranking performance by using our novel \emph{sequential coalescing} algorithm (cf.\ \cref{sec:ff_indexes.coalescing}). Our experiments show that we can indeed train extremely lightweight query encoders without adversely affecting ranking performance. Specifically, our most lightweight query encoders are orders of magnitude faster than \bertbase{} models with little performance degradation. More importantly, we can migrate query-processing to CPUs instead of relying on GPUs, improving on the environmental impact. Finally, we show that we can reduce index maintenance costs by around \num{50}\% by dynamically removing irrelevant document tokens. Our code is publicly available.

Note that this paper extends our previously published work~\cite{leonhardt2022efficient}, where we introduced \fastforward{} indexes along with the \emph{sequential coalescing} and \emph{early stopping} techniques. This paper introduces the following new aspects:
\begin{enumerate}
    \item We identify the query encoder as an efficiency bottleneck of \fastforward{} indexes and propose lightweight query encoders.
    \item We show that the dimensionality of queries and documents can be reduced in order to reduce index size and compute dot products faster.
    \item We propose a \emph{selective document encoder} that dynamically identifies irrelevant document tokens and drops them prior to indexing, reducing index maintenance cost.
    \item We perform additional experiments, including analyses of the trade-offs between efficiency and performance. We discuss the limitations of our method and its out-of-domain performance.
\end{enumerate}

\section{Related Work}
\label{sec:related_work}
Classical ranking approaches, such as \bm{}~\cite{robertson2009probabilistic} or the query likelihood model~\cite{lavrenko2001relevance}, rely on the inverted index that stores term-level statistics like term frequency, inverse document frequency and positional information. We refer to this style of methods as \emph{sparse}, since it assumes sparse document representations. The recent success of large pre-trained language models (e.g., BERT) shows that \emph{semantic} or contextualized information is essential for many language tasks. In order to incorporate such information in the relevance measurement, \citet{dai2020context,dai2020context2} proposed \deepct{}, which stores contextualized scores for terms in the inverted index for text ranking. \splade{}~\cite{formal2021splade} aims to enrich sparse document representations using a trained contextual Transformer model and sparsity regularization on the term weights. Similarly, \deepimpact{}~\cite{mallia2021learning} enriches the document collection with expansion terms to learn improved term impacts. In our work, we employ efficient sparse models for high-recall first-stage retrieval and perform re-ranking using semantic models in a subsequent step.

The ability to accurately determine semantic similarity is essential in order to alleviate the vocabulary mismatch problem~\cite{mitra2019incorporating,dai2019evaluation,dai2020context2,macavaney2020expansion,mackenzie2020efficiency}. Computing the semantic similarity of a document given a query has been heavily researched in IR using smoothing methods~\cite{lafferty2001document}, topic models~\cite{wei2006lda}, embeddings~\cite{mitra2016dual}, personalized models~\cite{luxenburger2008matching}, etc. In these classical approaches, ranking is performed by interpolating the semantic similarity scores with the lexical matching scores from the first-stage retrieval. More recently, \emph{dense} neural ranking methods, which employ large pre-trained language models, have become increasingly popular. Dense rankers do not explicitly model terms, but rather compute low-dimensional dense vector representations through self-attention mechanisms in order to estimate relevance; this allows them to perform semantic matching. However, the inherent complexity of dense ranking models usually has a negative impact on latency and cost, especially with large corpora. Therefore, besides performance, efficiency has been another major concern in developing neural ranking models.

There are two common architectures of dense ranking models: \emph{Cross-attention} models take a concatenation of a query and a document as input. This allows them to perform query-document attention in order to compute the corresponding relevance score. These models are typically used as re-rankers. \textit{Dual-encoder models} employ two language models to independently encode queries and documents as fixed-size vector representations. Usually, a similarity metric between query and document vector determines their relevance. As a result, dual-encoders are mostly used for dense retrieval, but also, less commonly, for re-ranking.

We divide the remainder of the related work section into subcategories for cross-attention models, dual-encoder models, and \textit{hybrid models}, which employ both lexical and semantic rankers. Finally, we briefly cover inference efficiency for BERT-based models.

\subsection{Cross-Attention Models}
The majority of cross-attention approaches have been dominated by large contextual models~\cite{dai2019deeper,macavaney2019cedr,yilmaz2019cross,hofstatter2020interpretable,hofstatter2021efficiently,li2020parade}. The input to these ranking models is a concatenation of the query and document. This combined input results in higher query processing times, since each document has to be processed in conjugation with the query string. Thereby, cross-attention models usually re-rank a relatively small number of potentially relevant candidates retrieved in the first stage by efficient sparse methods. The expensive re-ranking computation cost is then proportional to the retrieval depth (e.g., \num{1000} documents).

Another key limitation of using cross-attention models for document ranking is the maximum acceptable number of input tokens for Transformer models, which exhibit quadratic complexity w.r.t.\ input length. Some strategies address this limitation by document truncation~\cite{macavaney2019cedr}, or chunking documents into passages~\cite{dai2019deeper,rudra2020distant}. However, the performance of chunking-based strategies depends on the chunking properties, i.e., passage length or overlap among consecutive passages~\cite{rudra2021indepth}. Recent proposals include a two-stage approach, where a query-specific summary is generated by selecting relevant parts of the document, followed by re-ranking strategies over the query and summarized document~\cite{li2021keybld,hofstatter2021intra,leonhardt2023extractive,li2023power}. Due to the efficiency concerns, we do not consider cross-attention methods in our work, but focus on dual-encoders instead.

\subsection{Dual-Encoders}
Dual-encoders learn dense vector representations for queries and documents using contextual models~\cite{karpukhin2020dense,khattab2020colbert}. The dense vectors are then indexed in an offline phase~\cite{johnson2021billion}, where retrieval is akin to performing an approximate nearest neighbor (ANN) search given a vectorized query. This allows dual-encoders to be used for both retrieval and re-ranking. Consequently, there has been a large number of follow-up works that boost the performance of dual-encoder models by improving pre-training~\cite{chang2020pre,gao2021condenser,gao2022unsupervised,lassance2023experimental,wang2022text}, optimization~\cite{gao2020complement}, and negative sampling~\cite{prakash2021learning,xiong2021approximate,zhan2021optimizing} techniques, or employing distillation approaches~\cite{lin2020distilling,zhou2022fine,liu2022adam}. \citet{lindgren2021efficient} propose a \emph{negative cache} that allows for efficient training of dual-encoder models. \led{}~\cite{zhang2022led} uses a \splade{} model to enrich a dense encoder with lexical information. \citet{lin2023aggretriever} propose \aggr{}, a dual-encoder model which aggregates and exploits all token representations (instead of only the classification token). In this work, we use dual-encoders for computing semantic similarity between queries and passages. Some approaches have also proposed architectural modifications to the aggregations between the query and passage embeddings~\cite{chen2021co,jang2021uhd,hofstatter2021efficiently}. \citet{nogueira2019from} propose a simple document expansion model. We use dual-encoder models to perform efficient semantic re-ranking in our work.

Efficiency improvements of dual-encoder-based ranking and retrieval focus mostly on either inference efficiency of the encoders or memory footprint of the indexes. \tildeone~\cite{zhuang2021tilde} and \tildetwo{}~\cite{zhuang2021fast} efficiently re-rank documents using a deep query and document likelihood model instead of a query encoder. The \spade{} model~\cite{choi2022spade} employs a \emph{dual document encoder} that has a \emph{term weighting} and \emph{term expansion} component; it improves inference efficiency by using a vastly simplified query representation. \citet{li2022citadel} employ \emph{dynamic lexical routing} in order to reduce the number of dot products in the late interaction step. \citet{cohen2022sdr} use auto-encoders to compress document representations into fewer dimensions in order to reduce the overall size. \citet{dong2022seine} propose an approach to split documents into variable-length segments and dynamically merge them based on similarity, such that each document has the same number of segments prior to indexing. \citet{hofstatter2022introducing} introduce \colberter{}, an extension of \colbert{}~\cite{khattab2020colbert}, which removes irrelevant word representations in order to reduce the number of stored vectors. In a similar fashion, \citet{lassance2022learned} propose a \emph{learned token pruning} approach, which is also used to reduce the size of \colbert{} indexes by dropping tokens that are deemed irrelevant. \citet{yang2022compact} propose a \emph{contextual quantization} approach for pre-computed document representations (such as the ones used by \colbert{}) by compressing document-specific representations of terms.

In most of the previous work, dual-encoders are used in a \emph{homogeneous} or \emph{symmetric} fashion, meaning that both the query and document encoder have the same architecture or even share weights (\emph{Siamese} encoders). \citet{jung2022semi} show that the characteristics of queries and documents are different and employ \emph{light fine-tuning} in order to adapt each encoder to its specific role. \citet{seungyeon2023embeddistill} use model distillation for asymmetric dual-encoders, where the query encoder has fewer parameters than the document encoder. \citet{lassance2022efficiency} separate the query and document encoder of \splade{} models in order to improve efficiency. In this work, we explore the use of light-weight query encoders for more efficient re-ranking.

\subsection{Hybrid Models}
Hybrid models combine sparse and dense retrieval. The most common approach is a simple linear combination of both scores~\cite{lin2020distilling}. \clear{}~\cite{gao2020complement} takes the relevance of the lexical retriever into account in the loss function of the dense retriever. \coil{}~\cite{gao2021coil} performs contextualized exact matching using pre-computed document token representations. \coilcr{}~\cite{fan2023coilcr} extends this approach by factorizing token representations and approximating them using canonical representations in order to make retrieval more efficient.

Unlike classical methods, where score interpolation is the norm, semantic similarity from neural contextual models (e.g., cross-attention or dual-encoders) is not consistently combined with the matching score. Recently, \citet{wang2021bert} showed that the interpolation of BERT-based models and lexical retrieval methods can boost the performance. Furthermore, they analyze the role of interpolation in BERT-based dense retrieval strategies and find that dense retrieval alone is not enough, but interpolation with \bm{} scores is necessary. Similarly, \citet{askari2023injecting} find that even providing the \bm{} score as part of the input text improves the re-ranking performance of BERT models.

\subsection{Inference Efficiency}
Several methods have been proposed to improve the inference efficiency of large Transformer-based models, which have quadratic time complexity w.r.t.\ the input length. \powerbert{}~\cite{goyal2020power} progressively eliminates word vectors in the subsequent encoder layers in order to reduce the input size. \deebert{}~\cite{xin2020deebert} implements an \emph{early-exit} mechanism, which may stop the computation after any Transformer layer based on the entropy of its output distribution. \skipbert{}~\cite{wang2022skipbert} uses a technique where intermediate Transformer layers can be skipped dynamically using pre-computed look-up tables. We use a simple \selbert{} approach which dynamically removes irrelevant document tokens in order to make document encoding more efficient.

\section{Preliminaries}
\label{sec:prelims}
In this section, we introduce core concepts that are essential to this work, such as retrieval, re-ranking, and interpolation.

\subsection{Interpolation-Based Re-Ranking}
\label{sec:prelims.interpolated_reranking}
The retrieval of documents or passages given a query often happens in two stages~\cite{simmons1965answering}: In the first stage, a term frequency-based (\textbf{sparse}) retrieval method (such as \bm{}~\cite{robertson1995okapi}) retrieves a set of documents from a large corpus. In the second stage, another model, which is usually much more computationally expensive, \textbf{re-ranks} the retrieved documents again.

In \textbf{sparse retrieval}, we denote the top-$k_S$ documents retrieved from the sparse index for a query $q$ as $K^q_S$. The sparse score of a query-document pair $(q, d)$ is denoted by $\phi_S(q, d)$. For the \textbf{re-ranking} part, we focus on self-attention models (such as BERT~\cite{devlin2019bert}) in this work. These models operate by creating (internal) high-dimensional dense representations of queries and documents, focusing on their semantic structure. We refer to the outputs of these models as \textbf{dense} or \textbf{semantic} scores and denote them by $\phi_D(q, d)$. Due to the quadratic time complexity of self-attention w.r.t.\ the document length (and decreasing performance with increasing document length~\cite{luan2021sparse}), long documents are often split into passages, and the score of a document is then computed as the maximum of its passage scores:
\begin{equation}
    \label{eqn:prelims.interpolated_reranking.maxp}
    \phi_D(q, d) = \max_{p_i \in d} {\phi_D(q, p_i)}
\end{equation}
This approach is referred to as \emph{maxP}~\cite{dai2019deeper}.

The retrieval approach for a query $q$ starts by retrieving $K^q_S$ from the sparse index. For each retrieved document $d \in K^q_S$, the corresponding dense score $\phi_D(q, d)$ is computed. This dense score may then be used to re-rank the retrieved set to obtain the final ranking. However, it has been shown that the scores of the sparse retriever, $\phi_S$, can be beneficial for re-ranking as well~\cite{yilmaz2019cross}. To that end, an interpolation approach is employed~\cite{bruch2023analysis}, where the final score of a query-document pair is computed as
\begin{equation}
    \label{eqn:prelims.interpolated_reranking.interpolation}
    \phi(q, d) = \alpha \cdot \phi_S(q, d) + (1 - \alpha) \cdot \phi_D(q, d).
\end{equation}
Setting $\alpha = 0$ recovers the standard re-ranking procedure.

Since the set of documents retrieved by the sparse model is typically large (e.g., $k_S = 1000$), computing the dense score for each query-document pair can be very computationally expensive. In this paper, we focus on efficient implementations of interpolation-based re-ranking, specifically the computation of the dense scores $\phi_D$.

\subsection{Dual-Encoder Models}
\label{sec:prelims.dual_encoders}
The \emph{dual-encoder} architecture~\cite{karpukhin2020dense} employs neural semantic models to compute \emph{dense vector representations} of queries and documents. Specifically, a \emph{query encoder} $\zeta$ and a \emph{document encoder} $\eta$ map queries and documents to representations in a common $a$-dimensional vector space. The relevance score $\phi_D(q, d)$ of a query-document pair is then computed as the similarity of their vector representations. A common choice for the similarity function is the dot product, such that
\begin{equation}
    \phi_D(q, d) = \zeta(q) \cdot \eta(d),
\end{equation}
where $\zeta(q), \eta(d) \in \mathbb{R}^a$.

\subsubsection{Dense Retrieval}
\label{sec:prelims.dual_encoders.dense_retrieval}
Dual-encoder models are commonly utilized to perform \textbf{dense retrieval}~\cite{karpukhin2020dense}. A \emph{dense index} contains pre-computed vector representations $\eta(d)$ for all documents $d$ in the corpus $\mathcal{D}$. To retrieve a set of documents $K^q_D$ for a query $q$, a $k$-nearest neighbor ($k$NN) search is performed to find the documents whose representations are most similar to the query:
\begin{equation}
    K^q_D = \kargmax_{1 \leq i \leq |\mathcal{D}|} (\zeta(q) \cdot \eta(d_i))
\end{equation}
In order to make dense retrieval more efficient, \emph{approximate nearest neighbor} (ANN) search is commonly employed~\cite{johnson2021billion,malkov2020efficient}. ANN search can be further accelerated using special hardware, such as GPUs~\cite{johnson2021billion}.

\subsubsection{Training}
\label{sec:prelims.dual_encoders.training}
In contrast to \emph{cross-encoder} models, which are often used for re-ranking (cf.\ \cref{sec:prelims.interpolated_reranking}), dual-encoders encode the query and document \emph{independently}, i.e., there is no query-document attention. Typically, dual-encoders for retrieval are trained using a \emph{contrastive} loss function~{\cite{karpukhin2020dense},
\begin{equation}
    \label{eq:prelims.dual_encoders.training.loss}
    \mathcal{L}(q, d^+, D^-) =
    -\log
    \left(
    \frac
        {\exp \left(\phi(q, d^+; \theta) / \tau \right)}
        {\sum_{d \in D^- \cup \{d^+\}} \exp \left(\phi(q, d; \theta) / \tau \right)}
    \right),
\end{equation}
where a training instance consists of a query $q$, a positive (relevant) document $d^+$, and a set $D^-$ of negative (irrelevant) documents. The temperature $\tau$ is a hyperparameter. Since it is usually infeasible to include all negative documents for a query in $D^-$, there are various \emph{negative sampling} approaches, such as distillation~\cite{lin2020distilling}, asynchronous indexes~\cite{xiong2021approximate}, or negative caches~\cite{lindgren2021efficient}. In this work, we use a simple \emph{in-batch} strategy~\cite{karpukhin2020dense}, where, for a query $q$, $D^-$ contains a number of \emph{hard negatives} (retrieved by BM25) along with all documents from the other queries in the same training batch.

\subsection{Hybrid Retrieval}
\label{sec:prelims.hybrid_retrieval}
\textbf{Hybrid retrieval}~\cite{gao2020complement,lin2020distilling} is similar to interpolation-based re-ranking (cf.\ \cref{sec:prelims.interpolated_reranking}). The key difference is that the dense scores $\phi_D(q, d)$ are not computed for all query-document pairs. Instead, $\phi_D$ is a dense retrieval model (cf.\ \cref{sec:prelims.dual_encoders.dense_retrieval}), which retrieves documents $d_i$ and their scores $\phi_D(q, d_i)$ using nearest neighbor search given a query $q$. A hybrid retriever combines the retrieved sets of a sparse and a dense retriever.

For a query $q$, we retrieve two sets of documents, $K^q_S$ and $K^q_D$, using the sparse and dense retriever, respectively. Note that the two retrieved sets are usually not equal.
One strategy proposed in~\cite{lin2020distilling} ranks all documents in $K^q_S \cup K^q_D$, approximating missing scores. In our experiments, however, we found that \textbf{only} considering documents from $K^q_S$ for the final ranking and discarding the rest works well. The final score is thus computed as
\begin{equation}
    \label{eqn:prelims.hybrid_retrieval.hybrid}
    \phi(q, d) = \alpha \cdot \phi_S(q, d) + (1 - \alpha) \cdot
    \begin{cases}
        \phi_D(q, d) & d \in K^q_D    \\
        \phi_S(q, d) & d \notin K^q_D
    \end{cases}.
\end{equation}
The re-ranking step in hybrid retrieval is essentially a sorting operation over the interpolated scores and takes negligible time in comparison to standard re-ranking.

\section{\fastforward{} Indexes}
\label{sec:ff_indexes}
The hybrid approach described in \cref{sec:prelims.hybrid_retrieval} has two distinct disadvantages. Firstly, in order to retrieve $K_D^q$, an (approximate) nearest neighbor search has to be performed, which is time consuming. Secondly, some of the query-document scores are expected to be missed, leading to an incomplete interpolation, where the score of one of the retrievers needs to be approximated~\cite{lin2021batch} for a number of query-document pairs.

\begin{figure}
    \centering
    \begin{tikzpicture}
    \draw[
        thick,
        rotate=135,
        postaction={
                decorate,
                decoration={
                        text effects along path,
                        text={\ Embedding space\ },
                        text align=center,
                        text effects/.cd,
                        text along path,
                        every character/.style={
                                fill=white,
                                yshift=-0.5ex,
                            },
                    },
            },
    ] (0,0) ellipse (2.5 and 3.5);

    \node[
        circle,
        dotted,
        thick,
        minimum size=50,
        draw=red,
    ] (P012) at (1.25,1.25) {};
    \node[
        circle,
        fill=red,
        inner sep=0,
        minimum size=5,
        draw,
    ] (P) at (P012) {};
    \node[
        fill=white,
    ] at (P012.north) {\small \textcolor{red}{$\hat{p}_{012}$}};
    \draw[fill=gray] (0.75,0.75) circle (0.05) node[right] {\small $p_0$};
    \draw[fill=gray] (1.75,0.75) circle (0.05) node[above] {\small $p_1$};
    \draw[fill=gray] (1.75,1.75) circle (0.05) node[left] {\small $p_2$};

    \draw[fill=gray] (0.45,-1.7) circle (0.05) node[below] {\small $p_3$};
    \draw[fill=gray] (-2,0) circle (0.05) node[below] {\small $p_4$};

    \node[
        circle,
        dotted,
        thick,
        minimum size=45,
        draw=red,
    ] (P56) at (-0.3,-1) {};
    \node[
        circle,
        fill=red,
        inner sep=0,
        minimum size=5,
        draw,
    ] at (P56) {};
    \node[
        fill=white,
    ] at (P56.north) {\small \textcolor{red}{$\hat{p}_{56}$}};
    \draw[fill=gray] (0.15,-1.35) circle (0.05) node[above] {\small $p_5$};
    \draw[fill=gray] (-0.75,-0.65) circle (0.05) node[below] {\small $p_6$};

    \node[draw] (FF) at (-5,2) {$\eta^{\text{FF}}(p_1)$};
    \draw[
        ->,
        dashed,
        shorten >= 2,
    ] (FF) -- (P) node[
        pos=0.4,
        sloped,
        fill=white,
    ] {Index look-up};
\end{tikzpicture}
    \caption{Sequential coalescing combines the representations of similar consecutive passages as their average. Note that $p_3$ and $p_5$ are not combined, as they are not consecutive passages.}
    \label{fig:ff_indexes.coalescing}
\end{figure}
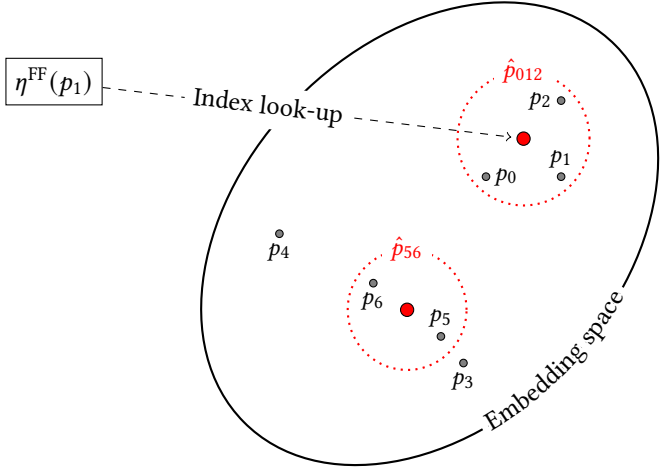
\begin{figure}
    \centering
    \begin{tikzpicture}
    \newcommand{\repr}[2]{$\begin{bmatrix} #1 \\ \vdots \\ #2 \end{bmatrix}$}
    \matrix [
        matrix of nodes,
        nodes in empty cells,
        column sep=5,
    ]
    (T) at (0,0)
    {
         &        & \node[align=center] {Matching \\ score}; &   & \node[align=center] {Semantic \\ similarity}; & \\
         & $d_8$  & \num{0.34}                               & + & \num{0.48}                                    & \\
         & $d_2$  & \num{0.32}                               & + & \textcolor{red}{\num{0.61}}        & \\
         &        &                                          &   &                                               & \\
         & $d_1$  & \num{0.08}                               & + & \textcolor{gray}{\num{0.61}}       & \\
         & \vdots & \vdots                                   &   &                                               & \\
    };

    \draw[thick] (T-1-1.south east) -- (T-1-6.south west);
    \draw[thin] (T-4-1.east) -- (T-4-6.west) node[midway,fill=white] {\scriptsize \textit{early stopping for top-\num{1}}};

    \node[
        below=1 of T-5-5,
        dashed,
        rounded corners,
        draw,
        outer sep=2,
    ] (M) {\textcolor{red}{estimated maximum}};
    \draw[->,dashed] (M) -- (T-5-5);

    \node[right=0 of T-5-5] {$\leq$ current top-$k$};

    \matrix [
        draw,
        rounded corners,
        right=3.5 of T,
        matrix of nodes,
        nodes in empty cells,
        inner sep=7,
        outer sep=2,
        column 3/.style={anchor=base west},
    ]
    (F)
    {
        $d_1$ & $\mapsto$ & \tiny \repr{0.09}{0.91}                                     \\
        $d_2$ & $\mapsto$ & \tiny \repr{0.58}{0.37} \repr{0.44}{0.19} \repr{0.71}{0.60} \\
              & \vdots    &                                                             \\
        $d_n$ & $\mapsto$ & \tiny \repr{0.12}{0.89} \repr{0.33}{0.10}                   \\
    };
    \node[
        fill=white,
    ] at (F.north) {\textbf{\fastforward{} index}};

    \draw[->,dashed] (T-3-5) -- (T-3-5 -| F.west) node[
        midway,
        fill=white,
        above,
    ] {$\max_{p_i \in d_2} \left( \zeta(q) \cdot \eta(p_i) \right)$};

\end{tikzpicture}
    \caption{Early stopping reduces the number of interpolation steps by computing an approximate upper bound for the dense scores. This example depicts the most extreme case, where only the top-\num{1} document is required.}
    \label{fig:ff_indexes.early_stopping}
\end{figure}
In this section, we propose \fastforward{} indexes as an efficient way of computing dense scores for known documents that alleviates the aforementioned issues. Specifically, \fastforward{} indexes build upon dual-encoder dense retrieval models that compute the score of a query-document pair as a dot product
\begin{equation}
    \phi_D(q, d) = \zeta(q) \cdot \eta(d),
\end{equation}
where $\zeta$ and $\eta$ are the query and document encoders, respectively. Examples of such models are \ance{}~\cite{xiong2021approximate} and \tct{}~\cite{lin2021batch}. Since the query and document representations are independent for two-tower models, we can pre-compute the document representations $\eta(d)$ for each document $d$ in the corpus. These document representations are then stored in an efficient hash map, allowing for look-ups in constant time. After the index is created, the score of a query-document pair can be computed as
\begin{equation}
    \phi_D^{FF}(q, d) = \zeta(q) \cdot \eta^{FF}(d),
\end{equation}
where the superscript $FF$ indicates the look-up of a pre-computed document representation in the \fastforward{} index. At retrieval time, only $\zeta(q)$ needs to be computed once for each query. As queries are usually short, this can be done on CPUs. The main benefit of this method is that the number of documents to be re-ranked can be much higher than with cross-attention models; the scoring operation is a simple look-up and dot product computation.

Note that the use of large Transformer-based query encoders still remains a bottleneck in terms of latency (or, if it is run on GPUs, cost). In \cref{sec:efficient_encoders}, we focus on lightweight encoder models.

\subsection{Index Compression via Sequential Coalescing}
\label{sec:ff_indexes.coalescing}
A major disadvantage of dense indexes and dense retrieval in general is the size of the final index. This is caused by two factors: Firstly, in contrast to sparse indexes, the dense representations cannot be stored as efficiently as sparse vectors. Secondly, the dense encoders are typically Transformer-based, imposing a (soft) limit on their input lengths due to their quadratic time complexity with respect to the inputs. Thus, long documents are split into passages prior to indexing (\emph{maxP} indexes).

\begin{algorithm}[t]
    \DontPrintSemicolon
    \SetKwFunction{Dist}{cosine\_distance}
    \SetKwFunction{Mean}{mean}
    \SetKw{In}{in}
    \KwIn{list of passage vectors $P$ (original order) of a document, distance threshold $\delta$}
    \KwOut{coalesced passage vectors $P'$}
    $P' \leftarrow$ empty list\;
    $\mathcal{A} \leftarrow \emptyset$\;
    \ForEach{$v$ \In $P$}{
        \uIf{first iteration}{
            \tcp{do nothing}
        }
        \uElseIf{$\Dist(v, \overline{\mathcal{A}}) \geq \delta$}{
            append $\overline{\mathcal{A}}$ to $P'$\;
            $\mathcal{A} \leftarrow \emptyset$\;
        }
        add $v$ to $\mathcal{A}$\;
        $\overline{\mathcal{A}} \leftarrow \Mean(\mathcal{A})$\;
    }
    append $\overline{\mathcal{A}}$ to $P'$\;
    \Return{$P'$}\;
    \caption{Compression of dense maxP indexes by sequential coalescing}
    \label{alg:ff_indexes.coalescing}
\end{algorithm}
As an increase in the index size has a negative effect on efficiency, both for nearest neighbor search and \fastforward{} indexing as used by our approach, we exploit a \emph{sequential coalescing} approach as a way of dynamically combining the representations of consecutive passages within a single document in maxP indexes. The idea is to reduce the number of passage representations in the index for a single document. This is achieved by exploiting the \emph{topical locality} that is inherent to documents~\cite{leonhardt2020boilerplate}. For example, a single document might contain information regarding multiple topics; due to the way human readers naturally ingest information, we expect documents to be authored such that a single topic appears mostly in consecutive passages, rather than spread throughout the whole document. Our approach aims to combine consecutive passage representations that encode similar information. To that end, we employ the cosine distance function and a \emph{threshold} parameter $\delta$ that controls the degree of coalescing. Within a single document, we iterate over its passage vectors in their original order and maintain a set $\mathcal{A}$, which contains the representations of the already processed passages, and continuously compute $\overline{\mathcal{A}}$ as the average of all vectors in $\mathcal{A}$. For each new passage vector $v$, we compute its cosine distance to $\overline{\mathcal{A}}$. If it exceeds the distance threshold $\delta$, the current passages in $\mathcal{A}$ are combined as their average representation $\overline{\mathcal{A}}$. Afterwards, the combined passages are removed from $\mathcal{A}$ and $\overline{\mathcal{A}}$ is recomputed. This approach is illustrated in \cref{alg:ff_indexes.coalescing}. \cref{fig:ff_indexes.coalescing} shows an example index after coalescing. To the best of our knowledge, there are no other forward index compression techniques proposed in literature so far.

\subsection{Faster Interpolation by Early Stopping}
\label{sec:ff_indexes.early_stopping}
As described in \cref{sec:prelims.interpolated_reranking}, by interpolating the scores of sparse and dense retrieval models, we perform implicit re-ranking, where the dense representations are pre-computed and can be looked up in a \fastforward{} index at retrieval time. Furthermore, increasing the sparse retrieval depth $k_S$, such that $k_S > k$, where $k$ is the final number of documents, improves the performance. A drawback of this is that an increase in the number of retrieved documents also results in an increase in the number of index look-ups.

Common term pruning mechanisms for term-at-a-time retrieval, such as \maxscore{}~\cite{turtle1995query} or \wand{}~\cite{broder2003efficient}, accelerate query processing for inverted-index-based retrievers; however, these techniques are not compatible with neural ranking models based on contextual query and document representations. Our use case is more similar to \emph{top-k query evaluation}, with algorithms such as the \emph{threshold algorithm}~\cite{fagin2001optimal} or probabilistic approximations~\cite{theobald2004topk}, but these approaches usually require sorted access, which is not available for the dense re-ranking scores in our case.

\begin{algorithm}[t]
    \DontPrintSemicolon
    \SetKwFunction{Sparse}{sparse}
    \SetKwFunction{Max}{max}
    \SetKw{In}{in}
    \SetKw{Break}{break}
    \KwIn{query $q$, sparse retrieval depth $k_S$, cut-off depth $k$, interpolation parameter $\alpha$}
    \KwOut{approximated top-$k$ scores $Q$}
    $Q \leftarrow$ priority queue of size $k$\;
    $s_{D} \leftarrow -\infty$\;
    $s_{min} \leftarrow -\infty$\;
    \ForEach{$d$ \In $\Sparse(q, k_S)$}{
        \uIf{$Q$ is full}{
            $s_{min} \leftarrow$ remove smallest item from Q\;
            $s_{best} \leftarrow \alpha \cdot \phi_S(q, d) + (1 - \alpha) \cdot s_{D}$\; \label{alg:ff_indexes.early_stopping:sbest}

            \uIf{$s_{best} \leq s_{min}$}{
                \tcp{early stopping}
                put $s_{min}$ into $Q$\;
                \Break\;
            }
        }
        \tcp{approximate max. dense score}
        $s_{D} \leftarrow \Max(\phi_D(q, d), s_{D})$\;
        $s \leftarrow \alpha \cdot \phi_S(q, d) + (1 - \alpha) \cdot \phi_D(q, d)$\;
        put $\Max(s, s_{min})$ into $Q$\;
    }
    \Return{$Q$}\;
    \caption{Interpolation with early stopping}
    \label{alg:ff_indexes.early_stopping}
\end{algorithm}
In this section, we propose an extension to \fastforward{} indexes that allows for \emph{early stopping}, i.e., avoiding a number of unnecessary look-ups, for cases where $k_S > k$ by approximating the maximum possible dense score. The early stopping approach takes advantage of the fact that documents are ordered by their sparse scores $\phi_S(q, d)$. Since the number of retrieved documents, $k_S$, is finite, there exists an upper limit $s_D$ for the corresponding dense scores such that $\phi_D(q, d) \leq s_D \forall d \in K^q_S$. Since the retrieved documents $K^q_S$ are ordered by their sparse scores, we can simultaneously perform interpolation and re-ranking by iterating over the ordered list of documents: Let $d_i$ be the $i$th highest ranked document by the sparse retriever. Recall that we compute the final score as
\begin{equation}
    \phi(q, d_i) = \alpha \cdot \phi_S(q, d_i) + (1 - \alpha) \cdot \phi_D(q, d_i).
\end{equation}
If $i > k$, we can compute the upper bound for $\phi(q, d_i)$ by exploiting the aforementioned ordering:
\begin{equation}
    s_{best} = \alpha \cdot \phi_S(q, d_{i-1}) + (1 - \alpha) \cdot s_D.
\end{equation}
In turn, this allows us to stop the interpolation and re-ranking if $s_{best} \leq s_{min}$, where $s_{min}$ denotes the score of the $k$th document in the current ranking (i.e., the currently lowest ranked document). Intuitively, this means that we stop the computation once the \emph{highest possible} interpolated score $\phi(q, d_i)$ is too low to make a difference. The approach is illustrated in \cref{alg:ff_indexes.early_stopping} and \cref{fig:ff_indexes.early_stopping}. Since the dense scores $\phi_D$ are usually unnormalized, the upper limit $s_D$ is unknown in practice. We thus approximate it by using the highest observed dense score at any given step.

\subsubsection{Theoretical Analysis}
We first show that the early stopping criteria, when using the true maximum of the dense scores, is sufficient to obtain the top-$k$ scores.
\begin{theorem}
    Let $s_D$, as used in \cref{alg:ff_indexes.early_stopping}, be the true maximum of the dense scores. Then the returned scores are the actual top-$k$ scores.
\end{theorem}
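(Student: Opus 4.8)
The plan is to exploit the two monotonicity properties on which \cref{alg:ff_indexes.early_stopping} rests: the sparse results are processed in non-increasing order of their scores $\phi_S$, and, by the hypothesis of the theorem, $s_D$ is the \emph{true} maximum $\max_{d \in K^q_S}\phi_D(q,d)$, so it dominates every dense score seen or unseen. From these two facts I would derive a single upper-bound invariant and then argue that the test $s_{best}\le s_{min}$ can only succeed once no unprocessed document is capable of entering the final top-$k$.

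First I would fix $\alpha\in[0,1]$ so that both interpolation weights are non-negative, and write $d_1,d_2,\dots,d_{k_S}$ for the retrieved documents in processing order, so that $\phi_S(q,d_1)\ge\phi_S(q,d_2)\ge\cdots$. The key lemma is that, at the iteration in which the loop reaches $d_i$ with a full queue, the quantity $s_{best}=\alpha\cdot\phi_S(q,d_i)+(1-\alpha)\cdot s_D$ upper-bounds the interpolated score of every document not yet finalized, i.e.\ $\phi(q,d_j)\le s_{best}$ for all $j\ge i$. This follows term by term, since $\phi_S(q,d_j)\le\phi_S(q,d_i)$ by the sorted order and $j\ge i$, while $\phi_D(q,d_j)\le s_D$ because $s_D$ is the true maximum:
\[
\phi(q,d_j)=\alpha\cdot\phi_S(q,d_j)+(1-\alpha)\cdot\phi_D(q,d_j)\le\alpha\cdot\phi_S(q,d_i)+(1-\alpha)\cdot s_D=s_{best}.
\]

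Next I would combine the lemma with the invariant maintained by the priority queue, namely that after any document has been scored, $Q$ holds exactly the $k$ largest interpolated scores seen so far (which the lines reinserting $\max(s,s_{min})$ and $s_{min}$ are designed to preserve). When the test $s_{best}\le s_{min}$ succeeds, $s_{min}$ is the current $k$th-largest score and $Q$, after $s_{min}$ is put back, holds those $k$ largest values. By the lemma, every remaining document $d_j$ with $j\ge i$ satisfies $\phi(q,d_j)\le s_{best}\le s_{min}$, so none can exceed the current $k$th score; discarding them therefore leaves the multiset of top-$k$ scores unchanged, and the returned $Q$ is exact. For completeness I would dispatch the complementary case in one line: if the test never fires, the loop scores all $k_S$ documents, which is an ordinary exhaustive top-$k$ computation and is trivially correct.

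The main obstacle I anticipate is the index bookkeeping in the lemma: one must check that $s_{best}$, although evaluated \emph{before} $d_i$ is scored, nonetheless bounds $\phi(q,d_i)$ itself, and this is exactly where the hypothesis $s_D=\max_{d}\phi_D(q,d)$ is indispensable—the running maximum actually used in practice would not yet include $\phi_D(q,d_i)$, so the bound could fail for the current document. A secondary, purely cosmetic point is handling ties at the boundary: a discarded document might satisfy $\phi(q,d_j)=s_{min}$, but since the statement concerns the top-$k$ \emph{scores}, such ties do not alter the returned multiset and can be settled in a single remark.
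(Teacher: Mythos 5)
Your proof is correct and follows essentially the same route as the paper's: the sorted sparse scores together with the true dense maximum give an upper bound $s_{best}$ on every not-yet-finalized interpolated score, so once $s_{best}\le s_{min}$ no remaining document can enter the top-$k$. Your version is somewhat more explicit than the paper's (you spell out the non-negativity of the interpolation weights, the priority-queue invariant, the case where the test never fires, and the fact that $s_{best}$ must also bound the \emph{current} document's score, which is precisely where the true-maximum hypothesis is needed), but the underlying argument is identical.
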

\begin{proof}
    First, note that the sparse scores, $\phi_S(q, d_i)$, are already sorted in decreasing order for a given query. By construction, the priority queue $Q$ always contains the highest scores corresponding to the list parsed so far. Let, after parsing $k$ scores, $Q$ be full. Now the possible best score $s_{best}$ is computed using the sparse score found next in the decreasing sequence and the maximum of all dense scores, $s_D$ (cf.\ \cref{alg:ff_indexes.early_stopping:sbest}). If $s_{best}$ is less than the minimum of the scores in $Q$, then $Q$ already contains the top-$k$ scores. To see this, note that the first component of $s_{best}$ is the largest among all unseen sparse scores (as the list is sorted) and $s_D$ is the maximum of the dense scores by our assumption.
\end{proof}
Next, we show that a good approximation of the top-$k$ scores can be achieved by using the sample maximum. To prove our claim, we use the Dvoretzky–Kiefer–Wolfowitz (DKW)~\cite{massart1990tight} inequality.
\begin{lemma}
    \label{lem:ff_indexes.early_stopping.dkw}
    Let $X_1, X_2, ..., X_n$ be $n$ real-valued independent and identically distributed random variables with the cumulative distribution function $F(\cdot)$. Let $F_n(\cdot)$ denote the empirical cumulative distributive function, i.e.,
    \begin{equation}
        F_{n}(x)=\frac{1}{n} \sum_{i=1}^{n} \mathbbm{1}_{\left\{X_{i} \leq x\right\}}, \quad x \in \mathbb{R}.
    \end{equation}
    According to the DKW inequality, the following estimate holds:
    \begin{equation}
        \Pr \left( \sup_{x \in \mathbb{R}} \left( F_{n}(x) - F(x) \right) > \epsilon \right) \leq e^{-2n \epsilon^2} \forall \epsilon \geq \sqrt{\frac{1}{2n} \ln 2}.
    \end{equation}
\end{lemma}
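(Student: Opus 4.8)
The plan is to recognize the statement as the one-sided Dvoretzky–Kiefer–Wolfowitz inequality with the sharp leading constant due to Massart, so the cleanest route is to invoke that result directly rather than re-derive it. The lemma is a verbatim instance of Massart's bound, and the restriction to $\epsilon \geq \sqrt{\frac{1}{2n}\ln 2}$ is not incidental: it is exactly the regime in which $e^{-2n\epsilon^2} \leq \frac{1}{2}$, which is precisely where the one-sided tight constant $1$ is valid (outside this range one must revert to the looser two-sided constant $2$). So the first and main step is to check that the hypotheses match — i.i.d.\ real-valued samples $X_1, \dots, X_n$ drawn from a common distribution with CDF $F$, and $F_n$ the corresponding empirical CDF — and then cite \cite{massart1990tight}, noting that the stated range of $\epsilon$ is what licenses the constant in front of the exponential to be $1$.

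Were a self-contained derivation desired, I would proceed in three stages. First, reduce to the uniform case via the probability integral transform: when $F$ is continuous, the $F(X_i)$ are i.i.d.\ uniform on $[0,1]$, and $\sup_x (F_n(x) - F(x))$ is invariant under this reparametrisation, so it suffices to control the uniform empirical process. Second, reduce the supremum over all $x \in \mathbb{R}$ to a supremum over the finitely many order statistics, using that $F_n$ is a right-continuous step function while $F$ is monotone, so the supremum is attained at a jump point. Third, establish the exponential concentration with the sharp exponent.

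The hard part is unambiguously this last step, i.e., obtaining the constant $1$ with no loss in the exponent. A Hoeffding bound applied pointwise already gives $\Pr(F_n(x) - F(x) > \epsilon) \leq e^{-2n\epsilon^2}$ for each fixed $x$, but promoting this to a uniform statement over all $x$ by a naive union bound or chaining argument inflates the constant. Controlling the one-sided maximal deviation of the empirical process without paying such a penalty is exactly the delicate analysis that the original authors left with an unspecified constant and that Massart later sharpened. I would therefore not attempt to reproduce that argument and would instead rely on the cited tight bound, keeping $\epsilon$ in the stated range so that the one-sided form with constant $1$ applies.
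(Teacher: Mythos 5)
Your proposal matches the paper's treatment exactly: the lemma is stated there as a direct citation of the one-sided DKW inequality with Massart's tight constant~\cite{massart1990tight}, with no independent derivation, and your observation that the restriction $\epsilon \geq \sqrt{\frac{1}{2n}\ln 2}$ is precisely the regime licensing the constant $1$ in the one-sided form is a correct and welcome clarification that the paper leaves implicit. Your sketch of a hypothetical self-contained proof (probability integral transform, reduction to order statistics, sharp maximal inequality) is sound as an outline, and you are right that the only genuinely hard step is Massart's, which it is appropriate to cite rather than reprove.
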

In the following, we show that, if $s_D$ is chosen as the maximum of a large random sample drawn from the set of dense scores, then the probability that any given dense score, chosen independently and uniformly at random from the dense scores, is greater than $s_D$ is exponentially small in the sample size.
\begin{theorem}
    Let $x_1, x_2, ..., x_n$ be a real-valued independent and identically distributed random sample drawn from the distribution of the dense scores with the cumulative distribution function $F(\cdot)$. Let $z = \max{(x_1, x_2, ..., x_n)}$. Then, for every $\epsilon > \frac{1}{\sqrt{2n}} \ln 2$, we obtain
    \begin{equation}
        \label{eq:ff_indexes.early_stopping.dkw_dense}
        \Pr(F(z) < 1 - \epsilon) \leq e^{-2n \epsilon^2}.
    \end{equation}
\end{theorem}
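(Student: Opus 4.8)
The plan is to express the event $\{F(z) < 1 - \epsilon\}$ directly in terms of the empirical distribution function $F_n$ of the sample, so that \cref{lem:ff_indexes.early_stopping.dkw} can be applied essentially as a black box. The single key observation is that, because $z = \max(x_1, x_2, \ldots, x_n)$ is the sample maximum, every observation satisfies $x_i \leq z$, and therefore $F_n(z) = \frac{1}{n}\sum_{i=1}^{n} \mathbbm{1}_{\{x_i \leq z\}} = 1$. This identity is what converts a statement about the unknown true CDF evaluated at a data-dependent point into a statement about the one-sided deviation $F_n - F$ that DKW controls.

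With this identity in hand, I would rewrite the target event. The condition $F(z) < 1 - \epsilon$ is equivalent to $1 - F(z) > \epsilon$, and since $F_n(z) = 1$ this is in turn equivalent to $F_n(z) - F(z) > \epsilon$. Because the supremum over all $x$ dominates the value taken at the particular point $x = z$, we obtain the set inclusion
\begin{equation}
    \{F(z) < 1 - \epsilon\} \subseteq \left\{\sup_{x \in \mathbb{R}}\left(F_n(x) - F(x)\right) > \epsilon\right\}.
\end{equation}
Taking probabilities on both sides and invoking the bound of \cref{lem:ff_indexes.early_stopping.dkw} then yields $\Pr(F(z) < 1 - \epsilon) \leq e^{-2n\epsilon^2}$, which is exactly \cref{eq:ff_indexes.early_stopping.dkw_dense}.

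There is essentially no analytical obstacle once the sample-maximum identity is noticed — the argument collapses to a two-line reduction. The only point I would be careful about is the admissible range of $\epsilon$: \cref{lem:ff_indexes.early_stopping.dkw} is stated for $\epsilon \geq \sqrt{\tfrac{1}{2n}\ln 2}$, so I would verify that the threshold $\tfrac{1}{\sqrt{2n}}\ln 2$ appearing in the theorem statement lies within (or can be reconciled with) that region, since otherwise the constant-$1$ form of the bound is not directly applicable. Conceptually, therefore, the ``hard part'' is not the calculation but recognizing that the right object to control is the \emph{one-sided} empirical-process supremum evaluated at $z$, which is precisely the choice that forces $F_n(z) = 1$ and reduces the claim to a single application of the DKW inequality.
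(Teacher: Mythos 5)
Your proof is correct and follows essentially the same route as the paper's: observe that $F_n(z)=1$ because $z$ is the sample maximum, then apply the one-sided DKW bound from \cref{lem:ff_indexes.early_stopping.dkw}. Your side remark about the admissible range of $\epsilon$ is well taken --- the theorem's threshold $\frac{\ln 2}{\sqrt{2n}}$ is slightly smaller than the lemma's $\sqrt{\frac{\ln 2}{2n}}$, a discrepancy the paper itself does not address --- but this does not change the argument.
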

\begin{proof}
    Let $F_n(\cdot)$ denote the empirical cumulative distribution function as above. Specifically, $F_n(x)$ is equal to the fraction of variables less than or equal to $x$. We then have $F_n(z) = 1$. By \cref{lem:ff_indexes.early_stopping.dkw}, we infer
    \begin{equation}
        \Pr(F_n(z) - F(z) > \epsilon) \leq e^{-2n \epsilon^2}.
    \end{equation}
    Substituting $F_n(z) = 1$, we obtain Equation~\eqref{eq:ff_indexes.early_stopping.dkw_dense}.
\end{proof}
This implies that the probability of any random variable $X$, chosen randomly from the set of dense scores, being less than or equal to $s_D$, is greater than or equal to $1 - \epsilon$ with high probability, i.e.,
\begin{equation}
    \Pr(P_D(X \leq s_D) \geq 1 - \epsilon) \geq 1 - e^{-2n \epsilon^2},
\end{equation}
where $P_D$ denotes the probability distribution of the dense scores. This means that, as our sample size grows until it reaches $k$, the approximation improves.
Note that, in our case, the dense scores are sorted (by corresponding sparse score) and thus the i.i.d.\ assumption cannot be ensured. However, we observed that the dense scores are positively correlated with the sparse scores. We argue that, due to this correlation, we can approximate the maximum score well.

\section{Efficient Encoders}
\label{sec:efficient_encoders}
BERT models are the de facto standard for both query and document encoders~\cite{karpukhin2020dense,lin2020distilling,xiong2021approximate}. The encoders are often \emph{homogeneous}, meaning that the architectures of both models are identical, or even \emph{Siamese}, i.e., the same encoder weights are used for both queries and documents. Other approaches are \emph{semi-Siamese} models~\cite{jung2022semi}, where \emph{light fine-tuning} is used to adapt each encoder to its input characteristics, or \tildeone{}~\cite{zhuang2021tilde} and \tildetwo{}~\cite{zhuang2021fast}, which do not require dense query representations. However, the most common choice remains the use of \bertbase{} for both encoders.

\begin{figure}
    \centering
    \input{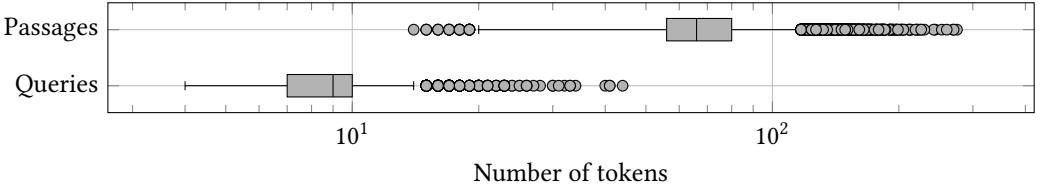}
    \caption{The distribution of query and passage lengths in the MS MARCO corpus. The statistics are computed based on the development set queries and the first \num{10000} passages from the corpus using a \bertbase{} tokenizer.}
    \label{fig:efficient_encoders.length_stats}
\end{figure}
In this paper, we argue that the homogeneous structure is not ideal for dual-encoder IR models w.r.t.\ query processing efficiency, since the characteristics of queries and documents differ~\cite{jung2022semi}. We illustrate those characteristics w.r.t.\ the average number of tokens in \cref{fig:efficient_encoders.length_stats}. This section focuses on model architectures for both query and document encoding that aim to improve the overall efficiency of the ranking process.

\subsection{Lightweight Query Encoders}
\label{sec:efficient_encoders.query}
Query encoders need to be run online during query processing, i.e., the representations cannot be pre-computed. Consequently, query encoding latency is essential for many downstream applications, such as search engines. Our experiments reveal that even encoding a large batch of \num{256} queries using a \bertbase{} model on CPU takes more than \num{3} seconds (cf.\ \cref{fig:results.query_enc.psg_results.latency_dev}), resulting in roughly \num{12} milliseconds per query (smaller batch sizes or even single queries lead to even slower encoding). Since queries are typically short and concise, we argue that query encoders require lower complexity (e.g., in terms of the number of parameters) than document encoders. Our proposed query encoders are considerably more lightweight than standard \bertbase{} models, and thus more efficient in terms of latency and resources.
\begin{figure*}
    \begin{subfigure}{.55\linewidth}
        \centering
        \begin{tikzpicture}
    \node[
        framedfignode,
        minimum width=150,
    ] (E) at (0,0) {Embedding};

    \node[
        fignode,
        below=1.25 of E.west,
        anchor=west,
    ] (T1) {$t_1$};
    \node[
        below=-0.2 of T1
    ] {\tiny \texttt{[CLS]}};

    \node[
        fignode,
        right=0 of T1,
        anchor=west,
    ] (T2) {$t_2$};
    \node[
        below=-0.2 of T2
    ] {\tiny \texttt{what}};

    \node[
        fignode,
        right=0 of T2,
        anchor=west,
    ] (T3) {$t_3$};
    \node[
        below=-0.2 of T3
    ] {\tiny \texttt{is}};

    \node[
        fignode,
        right=0 of T3,
        anchor=west,
    ] (T4) {$t_4$};
    \node[
        below=-0.2 of T4
    ] (T4b) {\tiny \texttt{the}};

    \node[
        fignode,
        anchor=north east,
    ] (Tq) at (T4.north -| E.east) {$t_{|q|}$};
    \node (Tqb) at (Tq |- T4b) {\tiny \texttt{[SEP]}};

    \node[
        fignode,
    ] at ($(T4.north)!0.5!(Tqb.south)$) {\dots};

    \node[
        framedfignode,
        minimum width=150,
        above=1.25 of E.west,
        anchor=west,
    ] (L) {Transformer encoder layer};

    \node[
        fignode,
        above=1.25 of L -| T1,
        anchor=center,
    ] (O) {$\hat{\zeta}(q)$};

    \draw[->] (T1) -- (T1 |- E.south);
    \draw[->] (T2) -- (T2 |- E.south);
    \draw[->] (T3) -- (T3 |- E.south);
    \draw[->] (T4) -- (T4 |- E.south);
    \draw[->] (Tq) -- (Tq |- E.south);
    \draw[->] (E.north -| T1) -- (L.south -| T1);
    \draw[->] (E.north -| T2) -- (L.south -| T2);
    \draw[->] (E.north -| T3) -- (L.south -| T3);
    \draw[->] (E.north -| T4) -- (L.south -| T4);
    \draw[->] (E.north -| Tq) -- (L.south -| Tq);
    \draw[->] (L.north -| T1) -- (O);

    \draw [
        decorate,
        decoration={
                brace,
                raise=5,
                amplitude=5,
            },
    ] (L.south west) -- (L.north west) node[left=0.5,midway] {$L \times$};
\end{tikzpicture}
        \caption{Attention-based query encoder}
        \label{fig:efficient_encoders.query.encoders.attn}
    \end{subfigure}
    \begin{subfigure}{.44\linewidth}
        \centering
        \begin{tikzpicture}
    \node[
        framedfignode,
        minimum width=150,
    ] (E) at (0,0) {Embedding};

    \node[
        fignode,
        below=1.25 of E.west,
        anchor=west,
    ] (T1) {$t_1$};
    \node[
        below=-0.2 of T1
    ] {\tiny \texttt{[CLS]}};

    \node[
        fignode,
        right=0 of T1,
        anchor=west,
    ] (T2) {$t_2$};
    \node[
        below=-0.2 of T2
    ] {\tiny \texttt{what}};

    \node[
        fignode,
        right=0 of T2,
        anchor=west,
    ] (T3) {$t_3$};
    \node[
        below=-0.2 of T3
    ] {\tiny \texttt{is}};

    \node[
        fignode,
        right=0 of T3,
        anchor=west,
    ] (T4) {$t_4$};
    \node[
        below=-0.2 of T4
    ] (T4b) {\tiny \texttt{the}};

    \node[
        fignode,
        anchor=north east,
    ] (Tq) at (T4.north -| E.east) {$t_{|q|}$};
    \node (Tqb) at (Tq |- T4b) {\tiny \texttt{[SEP]}};

    \node[
        fignode,
    ] at ($(T4.north)!0.5!(Tqb.south)$) {\dots};

    \node[
        framedfignode,
        trapezium,
        trapezium angle=60,
        trapezium stretches=true,
        minimum width=150,
        above=1.25 of E.center,
        anchor=center,
    ] (L) {Average};

    \node[
        fignode,
        above=1.25 of L.center,
        anchor=center,
    ] (O) {$\hat{\zeta}(q)$};

    \draw[->] (T1) -- (T1 |- E.south);
    \draw[->] (T2) -- (T2 |- E.south);
    \draw[->] (T3) -- (T3 |- E.south);
    \draw[->] (T4) -- (T4 |- E.south);
    \draw[->] (Tq) -- (Tq |- E.south);

    \draw[->] (E.north -| T1) -- (L.south -| T1);
    \draw[->] (E.north -| T2) -- (L.south -| T2);
    \draw[->] (E.north -| T3) -- (L.south -| T3);
    \draw[->] (E.north -| T4) -- (L.south -| T4);
    \draw[->] (E.north -| Tq) -- (L.south -| Tq);

    \draw[->] (L) -- (O);
\end{tikzpicture}
        \caption{Embedding-based query encoder}
        \label{fig:efficient_encoders.query.encoders.embedding}
    \end{subfigure}
    \caption{The query-encoder types used in this work. Note that the positional encoding that is added to BERT input tokens has been omitted in this figure.}
    \label{fig:efficient_encoders.query.encoders}
\end{figure*}
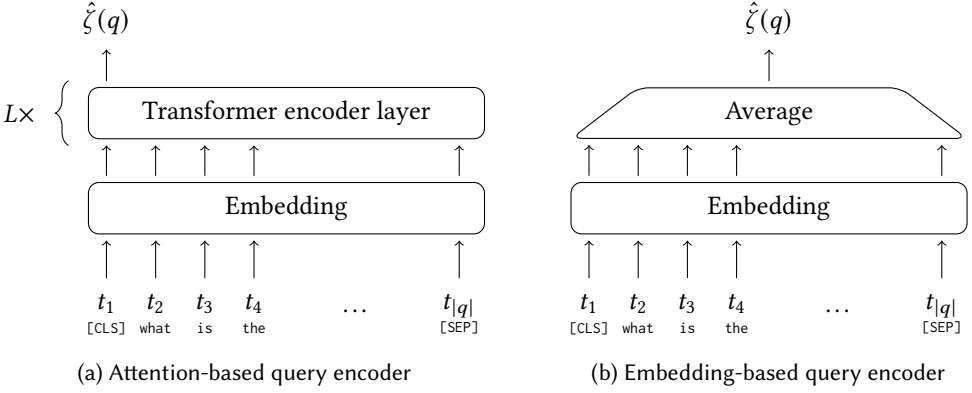

\subsubsection{Attention-based}
\label{sec:efficient_encoders.query.attn}
Attention-based query encoders (such as models based on BERT~\cite{devlin2019bert}) use Transformer encoder layers~\cite{vaswani2017attention} to compute query representations. Each of these layers has two main components --- \emph{multi-head attention} and a feed-forward sub-layer --- both of which include residual connections and layer normalization operations.

Attention is computed based on three input matrices --- the \emph{queries} $\mat{Q}$, \emph{keys} $\mat{K}$, and \emph{values} $\mat{V}$:
\begin{equation}
    \operatorname{Attn}(\mat{Q}, \mat{K}, \mat{V}) = \operatorname{softmax}\left(\frac{\mat{Q} \mat{K}^T}{\sqrt{d_k}}\right) \mat{V}.
\end{equation}
Multi-head attention computes attention multiple times (using $A$ \emph{attention heads} $h_i$) and concatenates the results, as denoted by $\circ$, i.e.,
\begin{equation}
    \begin{aligned}
        \operatorname{MultiHead}(\mat{Q}, \mat{K}, \mat{V}) & = \left(h_1 \circ \ldots \circ h_A \right) \mat{W}^{O},                                            \\
        \text{where } h_i                                   & = \operatorname{Attn} \left(\mat{Q} \mat{W}_i^Q, \mat{K} \mat{W}_i^K, \mat{V} \mat{W}_i^V \right).
    \end{aligned}
\end{equation}
The matrices $\mat{W}_i^Q \in \mathbb{R}^{H \times d_k}$, $\mat{W}_i^K \in \mathbb{R}^{H \times d_k}$, $\mat{W}_i^V \in \mathbb{R}^{H \times d_v}$ and $\mat{W}^O \in \mathbb{R}^{A d_v \times H}$ are trainable parameters, $H$ denotes the dimension of hidden representations in the model, and $d_k = \frac{H}{A}$ is a scaling factor.

Since Transformer encoders compute self-attention, the three inputs $\mat{Q}$, $\mat{K}$ and $\mat{V}$ originate from the same place, i.e., they are projections of the output of the previous encoder layer. The inputs to the first encoder layer originate from a token embedding layer. We denote the embedding operation as $E: \mathbb{N} \mapsto \mathbb{R}^H$, such that $E(t)$ is the embedding vector of a token $t$.

Given a BERT-based encoder and a query $q = (t_1, ..., t_{|q|})$, where $t_i$ are \texttt{WordPiece} tokens, the query representation is computed as
\begin{equation}
    \hat{\zeta}_\text{Attn}(q) = \operatorname{BERT}_\text{CLS}(\texttt{[CLS]}, t_1, ..., t_{|q|}, \texttt{[SEP]}),
\end{equation}
where $\operatorname{BERT}_\text{CLS}$ indicates that the output vector corresponding to the \emph{classification token}, denoted by \texttt{[CLS]}, is used. \Cref{fig:efficient_encoders.query.encoders.attn} shows attention-based query encoders.

The usual choice for query encoders, $\text{BERT}_\text{base}$, has $L=12$ layers, $H=768$ dimensions for hidden representations and $A=12$ attention heads. In this work, we investigate how less complex query encoders impact the re-ranking performance. Specifically, we vary three hyperparameters, namely the number of Transformer layers $L$, hidden dimensions $H$ and attention heads $A$. The pre-trained BERT models we use are provided by \citet{turc2019well}.

\subsubsection{Embedding-based}
\label{sec:efficient_encoders.query.embedding}
Embedding-based query encoders can be seen as a special case of BERT-based query encoders (cf.\ \cref{sec:efficient_encoders.query.attn}). Setting $L=0$, we obtain a model without any Transformer encoder layers; what's left is only the token embedding layer $E$.

Due to the omission of self-attention (and thus, contextualization) altogether, the usage of the \texttt{[CLS]} token is not feasible for this approach. Instead, a query $q = (t_1, ..., t_{|q|})$ is represented simply as the average of its token embeddings, i.e.,
\begin{equation}
    \hat{\zeta}_\text{Emb}(q) = \frac{\sum_{t_i \in q} E(t_i)}{|q|}.
\end{equation}
Embedding-based query encoders are illustrated in \cref{fig:efficient_encoders.query.encoders.embedding}.

\subsection{Selective Document Encoders}
\label{sec:efficient_encoders.doc}
Document encoders are not run during query processing time, since document representations are pre-computed and indexed. However, the computation of document representations still requires a substantial amount of time and resources. This is particularly important for applications like web search, where \emph{index maintenance} plays an important role, usually due to large amounts of new documents constantly needing to be added to the index. The effect is further amplified by the maxP approach (cf.\ \cref{eqn:prelims.interpolated_reranking.maxp}), where long documents require more than one encoding step. Since documents tend to be much longer and more complex than queries, lightweight document encoders would likely negatively affect performance, and recent research suggests that larger document encoders lead to better results~\cite{ni2021large}. However, due to the nature of documents obtained from web pages, we expect a considerable number of document tokens to be irrelevant for the encoding step; examples for this are stop words or redundant (repeated) information. Similar observations have been made in other approaches~\cite{hofstatter2022introducing}. Furthermore, recent research~\cite{rau2022role} has shown that certain aspects, such as the position of tokens, are not essential for large language models to perform well. Our proposed document encoders assign a \emph{relevance score} to each input token and dynamically drop low-scoring tokens before computing self-attention in order to make the document encoding step more efficient.

\begin{figure}
    \centering
    \begin{tikzpicture}
    \node[
        framedfignode,
        minimum width=150,
    ] (E) at (0,0) {Embedding};

    \node[
        fignode,
        below=1.25 of E.west,
        anchor=west,
    ] (T1) {$t_1$};
    \node[
        below=-0.2 of T1,
    ] (T1b) {\tiny \texttt{[CLS]}};

    \node[
        fignode,
        right=0.1 of T1,
        anchor=west,
    ] (T2) {$t_2$};
    \node[
        below=-0.2 of T2,
    ] (T2b) {\tiny \texttt{the}};

    \node[
        fignode,
        right=0.1 of T2,
        anchor=west,
    ] (T3) {$t_3$};
    \node[
        below=-0.2 of T3,
    ] {\tiny \texttt{meaning}};

    \node[
        fignode,
        right=0.1 of T3,
        anchor=west,
    ] (T4) {$t_4$};
    \node[
        below=-0.2 of T4,
    ] (T4b) {\tiny \texttt{of}};

    \node[
        fignode,
        anchor=north east,
    ] (Td) at (T4.north -| E.east) {$t_{|d|}$};
    \node (Tdb) at (Td |- T4b) {\tiny \texttt{[SEP]}};

    \node[
        fignode,
    ] at ($(T4.north)!0.5!(Tdb.south)$) {\dots};

    \node[
        framedfignode,
        minimum width=150,
        above=1.25 of E.west,
        anchor=west,
    ] (S) {Scoring network};
    \node[
        left=0.5 of S,
    ] (P) {$p$};
    \draw[->] (P) -- (S);

    \node[
        fignode,
        above=1.5 of S.west,
        anchor=west,
    ] (Ts1) {$\hat{t}_1$};
    \node[
        below=-0.2 of Ts1,
    ] (Ts1b) {\tiny \texttt{[CLS]}};

    \node[
        fignode,
        right=0.1 of Ts1,
        anchor=west,
    ] (Ts2) {$\hat{t}_2$};
    \node[
        below=-0.2 of Ts2,
    ] (Ts2b) {\tiny \texttt{meaning}};

    \node[
        fignode,
        right=0.1 of Ts2,
        anchor=west,
    ] (Ts3) {$\hat{t}_3$};
    \node[
        below=-0.2 of Ts3,
    ] (Ts3b) {\tiny \texttt{life}};

    \node[
        fignode,
        right=0.1 of Ts3,
        anchor=west,
    ] (Ts4) {$\hat{t}_4$};
    \node[
        below=-0.2 of Ts4,
    ] (Ts4b) {\tiny \texttt{[SEP]}};

    \node[
        framedfignode,
        minimum width=150,
        above=1.25 of Ts1.west,
        anchor=west,
    ] (L) {Transformer encoder layer};

    \node[
        fignode,
        above=1.25 of L -| T1,
        anchor=center,
    ] (O) {$\hat{\eta}(d)$};

    \begin{scope}[on background layer]
        \node[
            xshift=4,
            yshift=-4,
            fit=(T1b.south west |- T1.north west)(Tdb.south east),
            inner sep=0,
            dashed,
            rounded corners,
            draw=gray,
        ] (Tr) {};
        \node[
            xshift=2,
            yshift=-2,
            fit=(T1b.south west |- T1.north west)(Tdb.south east),
            inner sep=0,
            dashed,
            rounded corners,
            fill=white,
            draw=gray,
        ] {};
        \node[
            fit=(T1b.south west |- T1.north west)(Tdb.south east),
            inner sep=0,
            dashed,
            rounded corners,
            fill=white,
            draw,
        ] {};

        \node[
            xshift=4,
            yshift=-4,
            fit=(Ts1b.south west |- Ts1.north west)(Ts4b.south east),
            inner sep=0,
            dashed,
            rounded corners,
            draw=gray,
        ] (Tsr) {};
        \node[
            xshift=2,
            yshift=-2,
            fit=(Ts1b.south west |- Ts1.north west)(Ts4b.south east),
            inner sep=0,
            dashed,
            rounded corners,
            fill=white,
            draw=gray,
        ] (Tsm) {};
        \node[
            fit=(Ts1b.south west |- Ts1.north west)(Ts4b.south east),
            inner sep=0,
            dashed,
            rounded corners,
            fill=white,
            draw,
        ] {};

        \node[
            xshift=4,
            yshift=-4,
            fit=(O),
            inner sep=0,
            dashed,
            rounded corners,
            draw=gray,
        ] {};
        \node[
            xshift=2,
            yshift=-2,
            fit=(O),
            inner sep=0,
            dashed,
            rounded corners,
            fill=white,
            draw=gray,
        ] (Om) {};
        \node[
            fit=(O),
            inner sep=0,
            dashed,
            rounded corners,
            fill=white,
            draw,
        ] {};
    \end{scope}

    \draw[->] (T1) -- (T1 |- E.south);
    \draw[->] (T2) -- (T2 |- E.south);
    \draw[->] (T3) -- (T3 |- E.south);
    \draw[->] (T4) -- (T4 |- E.south);
    \draw[->] (Td) -- (Td |- E.south);

    \draw[->] (E.north -| T1) -- (S.south -| T1);
    \draw[->] (E.north -| T2) -- (S.south -| T2);
    \draw[->] (E.north -| T3) -- (S.south -| T3);
    \draw[->] (E.north -| T4) -- (S.south -| T4);
    \draw[->] (E.north -| Td) -- (S.south -| Td);

    \draw[->] (S.north -| Ts1) -- (Ts1b |- Tsm.south);
    \draw[->] (S.north -| Ts2) -- (Ts2b |- Tsm.south);
    \draw[->] (S.north -| Ts3) -- (Ts3b |- Tsm.south);
    \draw[->] (S.north -| Ts4) -- (Ts4b |- Tsm.south);

    \draw[->] (Ts1) -- (L.south -| Ts1);
    \draw[->] (Ts2) -- (L.south -| Ts2);
    \draw[->] (Ts3) -- (L.south -| Ts3);
    \draw[->] (Ts4) -- (L.south -| Ts4);

    \draw[->] (L.north -| T1) -- (Om.south -| T1);

    \draw [
        decorate,
        decoration={
                brace,
                raise=10,
                amplitude=5,
            },
    ] (T1b.south west |- Tr.south) -- (T1.north west -| T1b.south west) node[left=0.75,midway] {Input batch};

    \draw [
        decorate,
        decoration={
                brace,
                raise=10,
                amplitude=5,
            },
    ] (Ts1b.south west |- Tsr.south) -- (Ts1.north west -| Ts1b.south west) node[left=0.75,midway] {Shortened batch};

    \draw [
        decorate,
        decoration={
                brace,
                raise=10,
                amplitude=5,
            },
    ] (L.south west) -- (L.north west) node[left=0.75,midway] {$L \times$};
\end{tikzpicture}
    \caption{The fine-tuning and inference phase of \selbert{} document encoders. In the given example, the documents in the input batch are dynamically shortened to four tokens each based on the corresponding relevance scores. Note that the positional encoding that is added to BERT input tokens has been omitted in this figure.}
    \label{fig:efficient_encoders.doc.selbert}
\end{figure}
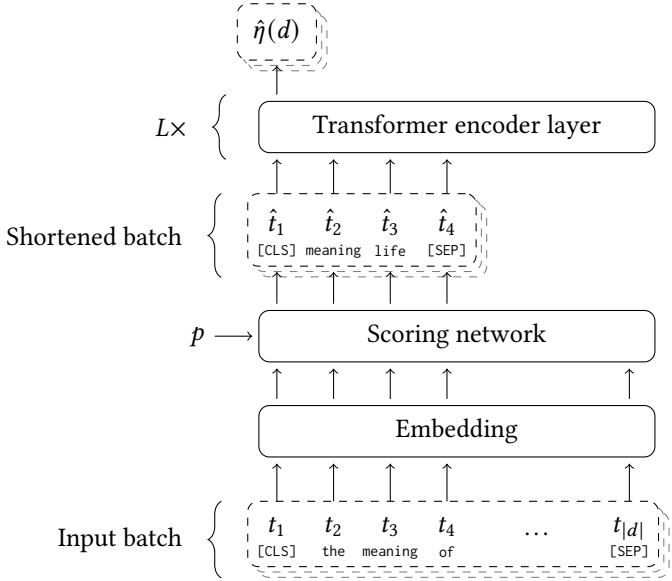
We refer to this approach as \selbert{}. It uses a \emph{scoring network} $\Phi: \mathbb{N} \mapsto [0, 1]$ to determine the relevance of each input token before feeding it into the encoding BERT model $\Psi$. We denote the parameters of the scoring network as $\theta_\Phi$ and the parameters of the BERT model as $\theta_\Psi$. We use a lightweight, non-contextual scoring network with three \num{384}-dimensional feed-forward layers and ReLU activations. The final layer outputs a scalar that is fed into a sigmoid activation function to compute the final score. \selbert{} models are trained in two steps.

\subsubsection{Pre-Training}
\label{sec:efficient_encoders.doc.pretraining}
The \textbf{first step} pre-trains the scoring network. $\theta_\Psi$ is initialized using the weights of a pre-trained \bert{} model (e.g., \bertbase{}), and $\theta_\Phi$ is initialized randomly. The complete model is then trained for a single epoch using the same data as during the unsupervised \bert{} pre-training step~\cite{devlin2019bert}. The scoring network $\Phi$ is taken into account by multiplying the embedding of an input token $t_i$ by its corresponding score, i.e.,
\begin{equation}
    x_i = E(t_i) \cdot \Phi(t_i) + P(t_i),
\end{equation}
where $E(t_i)$ is the token embedding and $P(t_i)$ is the positional encoding. The resulting representation $x_i$ is then used to compute self-attention in the first encoder layer.

In order to encourage the scoring network to output scores less than one, we introduce a regularization term using the $L_1$-norm over the scores, where $n$ is the input sequence length:
\begin{equation}
    \ell_1 = \sum_{i=0}^n \Phi(t_i).
\end{equation}
The final objective is a combination of the original BERT pre-training loss $\mathcal{L}$ and the scoring regularizer scaled by a hyperparameter $\lambda$:
\begin{equation}
    \min_{\theta_\Psi, \theta_\Phi} \left[ \mathcal{L}(\theta_\Psi, \theta_\Phi) + \lambda \cdot \ell_1(\theta_\Phi)\right].
\end{equation}

\subsubsection{Fine-Tuning and Inference}
\label{sec:efficient_encoders.doc.ft_inf}
The \textbf{second step}, referred to as \emph{fine-tuning}, only trains the BERT model $\Psi$, while the scoring network $\Phi$ remains frozen for the remainder of the training process. Furthermore, the weights of the BERT model obtained in the previous step, $\theta_\Psi$, are discarded and replaced by the same pre-trained model as before. The training objective during this stage is identical to that of other dual-encoder models (cf.\ \cref{sec:prelims.dual_encoders.training}).

During fine-tuning and inference (i.e., document encoding), we only retain the tokens with the highest scores; we set a ratio $p \in [0; 1]$ of the original input length to retain. As a result, the length of the input batch is shortened by $1-p$. This is achieved by removing the lowest scoring tokens from the input. Since individual documents within a batch are usually padded, $p$ always corresponds to the longest sequence in the batch. Consequently, padding tokens are always removed first before the scores of the other tokens are taken into account. The process is illustrated in \cref{fig:efficient_encoders.doc.selbert}.

\section{Experimental Setup}
\label{sec:setup}
In this section, we outline the experimental setup, including baselines, datasets, and further details about training and evaluation.

\subsection{Baselines}
\label{sec:setup.baselines}
We consider the following baselines:
\begin{enumerate}
    \item \textbf{Sparse retrievers} rely on term-based matching between queries and documents. We consider \bm{}, which uses term-based retrieval signals. \deepct{}~\cite{dai2020context}, \splade{}~\cite{formal2021splade}, and \spade{}~\cite{choi2022spade} use sparse representations, but contextualize terms in some fashion.
    \item \textbf{Dense retrievers} retrieve documents that are semantically similar to the query in a common embedding space. We consider \tct{}~\cite{lin2021batch}, \ance{}~\cite{xiong2021approximate}, and the more recent \aggr{}~\cite{lin2023aggretriever}. All three approaches are based on BERT encoders. Large documents are split into passages before indexing (maxP). These dense retrievers use exact (brute-force) nearest neighbor search as opposed to approximate nearest neighbor (ANN) search. We evaluate these methods in both the retrieval and re-ranking setting.
    \item \textbf{Hybrid retrievers} interpolate sparse and dense retriever scores. We consider \clear{}~\cite{gao2020complement}, a retrieval model that complements lexical models with semantic matching. Additionally, we consider the hybrid strategy described in Section~\ref{sec:prelims.hybrid_retrieval} as a baseline, using the dense retrievers above.
    \item \textbf{Re-rankers} operate on the documents retrieved by a sparse retriever (e.g., \bm{}). Each query-document pair is input into the re-ranker, which outputs a corresponding score. In this paper, we use a \bertcls{} re-ranker, where the output corresponding to the classification token is used as the score. Note that re-ranking is performed using the full documents (i.e., documents are not split into passages). If an input exceeds \num{512} tokens, it is truncated. Furthermore, we consider \tildetwo{}~\cite{zhuang2021fast} with \tildeone{} expansion.
\end{enumerate}

\subsection{Datasets}
\label{sec:setup.datasets}
We evaluate our models and baselines on a variety of diverse retrieval datasets:
\begin{enumerate}
    \item The \textbf{TREC Deep Learning track}~\cite{craswell2021trec} provides test sets and relevance judgments for retrieval and ranking evaluation on the MS MARCO corpora~\cite{nguyen2016ms}. We use both the passage and document ranking test sets from the years 2019 and 2020 for our experiments. In addition, we use the MS MARCO development sets to determine the optimal values for hyperparameters.
    \item The \textbf{BEIR benchmark}~\cite{thakur2021beir} is a collection of various IR datasets, which are commonly evaluated in a \emph{zero-shot} fashion, i.e., without using any of the data for training the model. We evaluate our models on a subset of the BEIR datasets, including tasks such as passage retrieval, question answering, and fact checking.
\end{enumerate}

\subsection{Evaluation Details}
\label{sec:setup.evaluation}
Our ranking experiments are performed on a single machine using an Intel Xeon Silver 4210 CPU and an NVIDIA Tesla V100 GPU. In our initial experiments (\cref{tab:results.rerank.model_eval_doc,tab:results.rerank.model_eval_passage}), we measured the per-query latency by performing each experiment four times and reporting the average latency, excluding the first measurement. In subsequent experiments (\cref{tab:results.rerank.first_stage,fig:results.query_enc.psg_results.latency_dev,fig:results.indexing_efficiency.sel_bert.latency}), we adjusted our way of measuring; we perform multiple runs of each experiment, where each run contains multiple latency measurements. We then report the average over all measurements of the fastest run. In~\cref{tab:results.rerank.model_eval_doc,tab:results.rerank.model_eval_passage}, latency is reported as the sum of scoring (this includes operations like encoding queries and documents, obtaining representations from a \fastforward{} index, computing the scores as dot-products, and so on), interpolation (cf.\ \cref{eqn:prelims.interpolated_reranking.interpolation}), and sorting cost. Any pre-processing or tokenization cost is ignored. Where applicable, dense models use a batch size of \num{256}. The first-stage (sparse) retrieval step is not included, as it is constant for all methods. The \fastforward{} indexes are loaded into the main memory entirely before they are accessed. In~\cref{tab:results.rerank.first_stage}, we report end-to-end latency, which includes retrieval, re-ranking, and tokenization cost.

\begin{table}
    \center
    \begin{tabular}{lll}
        \toprule
                                        & MS MARCO (documents)                                & MS MARCO (passages)                                             \\
        \midrule
        \multirowcell{2}[0pt][l]{\ance} & \scriptsize\texttt{castorini/ance-msmarco-doc-maxp} & \scriptsize\texttt{castorini/ance-msmarco-passage}              \\
                                        & \scriptsize\texttt{msmarco-doc-ance-maxp-bf}        & \scriptsize\texttt{msmarco-passage-ance-bf}                     \\
        \midrule
        \multirowcell{2}[0pt][l]{\tct}  & \scriptsize\texttt{castorini/tct\_colbert-msmarco}  & \scriptsize\texttt{castorini/tct\_colbert-msmarco}              \\
                                        & \scriptsize\texttt{msmarco-doc-tct\_colbert-bf}     & \scriptsize\texttt{msmarco-passage-tct\_colbert-bf}             \\
        \midrule
        \multirowcell{2}[0pt][l]{\aggr} & \scriptsize\multirowcell{2}[0pt][l]{-}              & \scriptsize\texttt{castorini/aggretriever-cocondenser}          \\
                                        &                                                     & \scriptsize\texttt{msmarco-v1-passage.aggretriever-cocondenser} \\
        \bottomrule
    \end{tabular}
    \caption{The pre-trained dense encoders and corresponding indexes we used in our experiments. In each cell, the first line corresponds to a pre-trained encoder (to be obtained from the HuggingFace Hub) and the second line is a pre-built index provided by \pyserini{}.}
    \label{tab:setup.evaluation.encoders_indexes}
\end{table}
We use the \pyserini{}~\cite{lin2021pyserini} toolkit, which provides a number of pre-trained encoders (available on the \emph{HuggingFace Hub}\footnote{\url{https://huggingface.co/models}}) and corresponding indexes (see \cref{tab:setup.evaluation.encoders_indexes}), for our retrieval experiments. Dense encoders (\ance{}, \tct{}, and \aggr{}) output \num{768}-dimensional representations. The sparse \bm{} retriever is provided by \pyserini{} as well. We use the pre-built indexes \texttt{msmarco-passage} ($k_1=0.82$, $b=0.68$) and \texttt{msmarco-doc} ($k_1=4.46$, $b=0.82$). Furthermore, we use \pyserini{} to run \splade{} with the provided \texttt{msmarco-passage-distill-splade-max} index and the pre-trained \dsplade{} model.

We use the MS MARCO development set to determine the interpolation parameter $\alpha$. We set $\alpha = 0.2$ for \tct{}, $\alpha = 0.5$ for \ance{}, and $\alpha = 0.7$ for \bertcls{} (\cref{sec:results.rerank}). For \aggr{}, we set $\alpha = 0.3$ for \bm{} re-ranking and $\alpha = 0.1$ for \splade{} re-ranking. For the dual-encoder models we trained ourselves (\cref{sec:results.query_enc,sec:results.index_size,sec:results.indexing_efficiency}), the value for $\alpha$ is determined based on nDCG@10 re-ranking results on the MS MARCO development set and varies slightly for each model.

\subsection{Training Details}
\label{sec:setup.training}
Our dual-encoder models are trained using the contrastive loss in \cref{eq:prelims.dual_encoders.training.loss}. For each training instance, we sample \num{8} hard negative documents using BM25. Additionally, we use in-batch negatives and a batch size of \num{4}, resulting in $|D^-| = 32$ negatives for each query. Each model is trained on four NVIDIA A100 GPUs. We set the learning rate to $1 \cdot 10^{-5}$ and use gradient accumulation of \num{32} batches (this results in an effective batch size of $4 \cdot 4 \cdot 32 = 512$). During training, we perform validation on the MS MARCO development set. Our models are trained until the average precision stops improving for five consecutive iterations. We exclusively train on the MS MARCO passage ranking corpus; the resulting models are then evaluated on multiple datasets (i.e., for BEIR, we do zero-shot evaluation). Our \selbert{} model (cf.\ \cref{sec:efficient_encoders.doc}) uses $\lambda = 10^{-6}$ during pre-training. We implemented our models and training pipeline using PyTorch,\footnote{\url{https://pytorch.org/}} PyTorch-Lightning,\footnote{\url{https://pytorchlightning.ai/}} and Transformers.\footnote{\url{https://huggingface.co/}}

\subsubsection{Dual-Encoder Architecture}
\label{sec:setup.training.architecture}
Our dual-encoder rankers consist of a query encoder $\zeta$ and a document encoder $\eta$ (cf.\ \ref{sec:prelims.dual_encoders}):
\begin{align}
    \zeta(q) & = || \mat{W}_\zeta \hat{\zeta}(q) + b_\zeta ||_2, \\
    \eta(d)  & = || \mat{W}_\eta \hat{\eta}(d) + b_\eta ||_2.
\end{align}
The models $\hat{\zeta}$ and $\hat{\eta}$ map queries and documents to arbitrary vector representations; examples for these models are pre-trained Transformers or the encoders described in \cref{sec:efficient_encoders}. We include optional trainable linear layers (with corresponding weights $\mat{W}_\zeta \in \mathbb{R}^{a \times d_\zeta}$, $\mat{W}_\eta \in \mathbb{R}^{a \times d_\eta}$, $b_\zeta \in \mathbb{R}^a$ and $b_\eta \in \mathbb{R}^a$) for heterogeneous encoders, where the dimensions of the representation vectors, $d_\zeta$ and $d_\eta$, do not match. We further $L_2$-normalize the representations during training and indexing; we do not normalize the query representations during ranking, as this would only scale the scores, but not change the final ranking.

\section{Results}
\label{sec:results}
In this section, we perform experiments to show the effectiveness and efficiency of \fastforward{} indexes. Each subsection corresponds to one of our research questions.

\subsection{How suitable are dual-encoder models for interpolation-based re-ranking in terms of performance and efficiency?}
\label{sec:results.rerank}
This section focuses on the effectiveness and efficiency of \fastforward{} indexes for re-ranking. We use pre-trained dual-encoders that are homogeneous (i.e., both encoders are identical models) for our experiments.

\subsubsection{Interpolation-based Re-Ranking Performance of Dual-Encoder Models}
\begin{table*}
    \centering
    \small
    \begin{tabular}{llllllllll}
        \toprule
                            & \multicolumn{3}{c}{\trecdldocn}
                            & \multicolumn{3}{c}{\trecdldoct}
                            & \multicolumn{3}{c}{\trecdlpsgn}                                                   \\
        \cmidrule(lr){2-4}
        \cmidrule(lr){5-7}
        \cmidrule(lr){8-10}
                            & $\text{AP}_\text{1k}$           & $\text{R}_\text{1k}$ & $\text{nDCG}_\text{10}$
                            & $\text{AP}_\text{1k}$           & $\text{R}_\text{1k}$ & $\text{nDCG}_\text{10}$
                            & $\text{AP}_\text{1k}$           & $\text{R}_\text{1k}$ & $\text{nDCG}_\text{10}$  \\
        \midrule
        \multicolumn{10}{l}{\bfseries \sparseretrieval}                                                         \\
        \bm                 & \num{0.331}                     & \num{0.697}          & \num{0.519}\sigimpr{abc}
                            & \num{0.404}                     & \num{0.809}          & \num{0.527}\sigimpr{abc}
                            & \num{0.301}                     & \num{0.750}          & \num{0.506}\sigimpr{abc} \\
        \deepct             & -                               & -                    & \num{0.544}
                            & -                               & -                    & -
                            & \num{0.422}                     & \num{0.756}          & \num{0.551}              \\
        \midrule
        \multicolumn{10}{l}{\bfseries \denseretrieval}                                                          \\
        \tct                & \num{0.279}                     & \num{0.576}          & \num{0.612}\sigimpr{a}
                            & \num{0.372}                     & \num{0.728}          & \num{0.586}\sigimpr{ab}
                            & \num{0.391}                     & \num{0.792}          & \num{0.670}              \\
        \ance               & \num{0.254}                     & \num{0.510}          & \num{0.633}\sigimpr{a}
                            & \num{0.401}                     & \num{0.681}          & \num{0.633}
                            & \num{0.371}                     & \num{0.755}          & \num{0.645}              \\
        \midrule
        \multicolumn{10}{l}{\bfseries \hybrid}                                                                  \\
        \clear              & -                               & -                    & -
                            & -                               & -                    & -
                            & \num{0.511}                     & \num{0.812}          & \num{0.699}              \\
        \midrule
        \multicolumn{10}{l}{\bfseries \reranking}                                                               \\
        \tct                & \num{0.370}                     & \num{0.697}          & \num{0.685}
                            & \num{0.414}                     & \num{0.809}          & \num{0.617}
                            & \num{0.423}                     & \num{0.750}          & \num{0.694}              \\
        \ance               & \num{0.336}                     & \num{0.697}          & \num{0.654}
                            & \num{0.426}                     & \num{0.809}          & \num{0.630}
                            & \num{0.389}                     & \num{0.750}          & \num{0.679}              \\
        \bertcls            & \num{0.283}                     & \num{0.697}          & \num{0.520}\sigimpr{abc}
                            & \num{0.329}                     & \num{0.809}          & \num{0.522}\sigimpr{abc}
                            & \num{0.353}                     & \num{0.750}          & \num{0.578}\sigimpr{ab}  \\
        \midrule
        \multicolumn{10}{l}{\bfseries \interpolatedreranking}                                                   \\
        \sigdef{a} \tct     & \num{0.406}                     & \num{0.697}          & \num{0.696}
                            & \num{0.469}                     & \num{0.809}          & \num{0.637}
                            & \num{0.438}                     & \num{0.750}          & \num{0.708}              \\
        \sigdef{b} \ance    & \num{0.387}                     & \num{0.697}          & \num{0.673}
                            & \num{0.490}                     & \num{0.809}          & \num{0.655}
                            & \num{0.417}                     & \num{0.750}          & \num{0.680}              \\
        \sigdef{c} \bertcls & \num{0.365}                     & \num{0.697}          & \num{0.612}
                            & \num{0.460}                     & \num{0.809}          & \num{0.626}
                            & \num{0.378}                     & \num{0.750}          & \num{0.617}              \\
        \bottomrule
    \end{tabular}
    \caption{Ranking performance. Retrievers use depths $k_S = 1000$ (sparse) and $k_D = 10000$ (dense). Dense retrievers retrieve passages and perform maxP aggregation for documents. Scores for \clear{} and \deepct{} are taken from the corresponding papers~\cite{gao2020complement,gao2021coil}. Superscripts indicate statistically significant improvements using two-paired tests with a sig.\ level of \num{95}\%~\cite{paired_significance_test}.}
    \label{tab:results.rerank.model_eval}
\end{table*}
In \cref{tab:results.rerank.model_eval}, we report the performance of sparse, dense and hybrid retrievers, re-rankers and interpolation.

First, we observe that dense retrieval strategies perform better than sparse ones in terms of nDCG, but have poor recall except on \trecdlpsgn{}. The contextual weights learned by \deepct{} are better than tf-idf-based retrieval (\bm{}), but fall short of dense semantic retrieval strategies (\tct{} and \ance{}) with differences upwards of \num{0.1} in nDCG. However, the overlap among retrieved documents is rather low, reflecting that dense retrieval cannot match query and document terms well.

Second, dual-encoder-based (\tct{} and \ance{}) perform better than contextual (\bertcls{}) re-rankers. In this setup, we first retrieve $k_S = 1000$ documents using a sparse retriever and re-rank them. This approach benefits from high recall in the first stage and promotes the relevant documents to the top of the list through the dense semantic re-ranker. However, re-ranking is typically time-consuming and requires GPU acceleration. The improvements of \tct{} and \ance{} over \bertcls{} (e.g., \num{0.1} in nDCG) also suggest that dual-encoder-based re-ranking strategies are better than cross-interaction-based methods. However, the difference could also be attributed to the fact that \bertcls{} does not follow the maxP approach (cf.\ \cref{sec:prelims.interpolated_reranking}).

Finally, interpolation-based re-ranking, which combines the benefits of sparse and dense scores, significantly outperforms the \bertcls{} re-ranker and dense retrievers. Recall that dense re-rankers operate solely based on the dense scores and discard the sparse \bm{} scores of the query-document pairs. The superiority of interpolation-based methods is also supported by evidence from recent studies~\cite{chang2020pre,chen2021co,gao2020complement,gao2021coil}.

\subsubsection{Efficient Re-Ranking at Higher Retrieval Depths}
\begin{sidewaystable}
    \small
    \centering
    \begin{tabular}{lr@{\hspace{0.5\tabcolsep}}c@{\hspace{0.5\tabcolsep}}lllllllllllll}
        \toprule
         &
         &
         &
         & \multicolumn{6}{c}{\trecdldocn}
         & \multicolumn{6}{c}{\trecdldoct}
        \\
        \cmidrule(lr){5-10}
        \cmidrule(lr){11-16}
         & \multicolumn{3}{c}{Latency}
         & \multicolumn{3}{c}{$k_S = 1000$}
         & \multicolumn{3}{c}{$k_S = 5000$}
         & \multicolumn{3}{c}{$k_S = 1000$}
         & \multicolumn{3}{c}{$k_S = 5000$}
        \\
        \cmidrule(lr){2-4}
        \cmidrule(lr){5-7}
        \cmidrule(lr){8-10}
        \cmidrule(lr){11-13}
        \cmidrule(lr){14-16}
         & \multicolumn{3}{c}{ms}
         & $\text{AP}_\text{1k}$            & $\text{R}_\text{1k}$ & $\text{nDCG}_\text{20}$
         & $\text{AP}_\text{1k}$            & $\text{R}_\text{1k}$ & $\text{nDCG}_\text{20}$
         & $\text{AP}_\text{1k}$            & $\text{R}_\text{1k}$ & $\text{nDCG}_\text{20}$
         & $\text{AP}_\text{1k}$            & $\text{R}_\text{1k}$ & $\text{nDCG}_\text{20}$
        \\
        \midrule
        \multicolumn{16}{l}{\bfseries \hybrid}                                                                                                                         \\
        \bm, \tct
         & \gpu{\num{0}}                    & +                    & \cpu{\num{582}}
         & \num{0.394}                      & \num{0.697}          & \num{0.655}                & \num{0.385}        & \num{0.729}        & \num{0.645}
         & \num{0.463}                      & \num{0.809}          & \num{0.615}                & \num{0.469}        & \num{0.852}        & \num{0.621}
        \\
        \bm, \ance
         & \gpu{\num{0}}                    & +                    & \cpu{\num{582}}
         & \num{0.379}                      & \num{0.697}          & \num{0.633}                & \num{0.373}        & \num{0.727}        & \num{0.628}
         & \num{0.479}                      & \num{0.809}          & \num{0.624}                & \num{0.488}        & \num{0.846}        & \num{0.632}
        \\
        \midrule
        \multicolumn{16}{l}{\bfseries \reranking}                                                                                                                      \\
        \tct
         & \gpu{\num{1189}}                 & +                    & \cpu{2}
         & \num{0.370}                      & \num{0.697}          & \num{0.632}                & \num{0.334}        & \num{0.703}        & \num{0.609}\sigimpr{a}
         & \num{0.414}                      & \num{0.809}          & \num{0.587}\sigimpr{a}     & \num{0.405}        & \num{0.794}        & \num{0.585}\sigimpr{acd}
        \\
        \ance
         & \gpu{\num{1189}}                 & +                    & \cpu{\num{2}}
         & \num{0.336}                      & \num{0.697}          & \num{0.614}                & \num{0.304}        & \num{0.647}        & \num{0.607}
         & \num{0.426}                      & \num{0.809}          & \num{0.595}\sigimpr{c}     & \num{0.422}        & \num{0.761}        & \num{0.604}
        \\
        \bertcls
         & \gpu{\num{185}}                  & +                    & \cpu{\num{2}}
         & \num{0.283}                      & \num{0.697}          & \num{0.494}\sigimpr{abcde} & \num{0.159}        & \num{0.559}        & \num{0.289}
         & \num{0.329}                      & \num{0.809}          & \num{0.512}\sigimpr{abcde} & \num{0.221}        & \num{0.727}        & \num{0.375}\sigimpr{abcde}
        \\
        \midrule
        \multicolumn{16}{l}{\bfseries \interpolatedreranking}                                                                                                          \\
        \midrulesep
        \hide{\sigdef{a}} \tct
         & \gpu{\num{1189}}                 & +                    & \cpu{\num{14}}
         & \hide{\num{0.406}}               & \hide{\num{0.697}}   & \hide{\num{0.655}}         & \hide{\num{0.411}} & \hide{\num{0.745}} & \hide{\num{0.653}}
         & \hide{\num{0.469}}               & \hide{\num{0.809}}   & \hide{\num{0.621}}         & \hide{\num{0.478}} & \hide{\num{0.838}} & \hide{\num{0.626}}
        \\
        \sigdef{a} \tablearrow \fastforward
         & \gpu{\num{0}}                    & +                    & \cpu{\num{253}}
         & \num{0.406}                      & \num{0.697}          & \num{0.655}                & \num{0.411}        & \num{0.745}        & \num{0.653}
         & \num{0.469}                      & \num{0.809}          & \num{0.621}                & \num{0.478}        & \num{0.838}        & \num{0.626}
        \\
        \sigdef{b} \quad \tablearrow coalesced
         & \gpu{\num{0}}                    & +                    & \cpu{\num{109}}
         & \num{0.379}                      & \num{0.697}          & \num{0.630}                & \num{0.379}        & \num{0.732}        & \num{0.625}
         & \num{0.440}                      & \num{0.809}          & \num{0.59}4\sigimpr{a}     & \num{0.447}        & \num{0.837}        & \num{0.607}
        \\
        \midrulesep
        \hide{\sigdef{c}} \ance
         & \gpu{\num{1189}}                 & +                    & \cpu{\num{14}}
         & \hide{\num{0.387}}               & \hide{\num{0.697}}   & \hide{\num{0.638}}         & \hide{\num{0.393}} & \hide{\num{0.732}} & \hide{\num{0.639}}
         & \hide{\num{0.490}}               & \hide{\num{0.809}}   & \hide{\num{0.630}}         & \hide{\num{0.502}} & \hide{\num{0.828}} & \hide{\num{0.640}}
        \\
        \sigdef{c} \tablearrow \fastforward
         & \gpu{\num{0}}                    & +                    & \cpu{\num{253}}
         & \num{0.387}                      & \num{0.697}          & \num{0.638}                & \num{0.393}        & \num{0.732}        & \num{0.639}
         & \num{0.490}                      & \num{0.809}          & \num{0.630}                & \num{0.502}        & \num{0.828}        & \num{0.640}
        \\
        \sigdef{d} \quad \tablearrow coalesced
         & \gpu{\num{0}}                    & +                    & \cpu{\num{121}}
         & \num{0.372}                      & \num{0.697}          & \num{0.625}                & \num{0.375}        & \num{0.723}        & \num{0.628}
         & \num{0.471}                      & \num{0.809}          & \num{0.622}                & \num{0.479}        & \num{0.823}        & \num{0.629}
        \\
        \midrulesep
        \sigdef{e} \bertcls
         & \gpu{\num{185}}                  & +                    & \cpu{\num{14}}
         & \num{0.365}                      & \num{0.697}          & \num{0.585}                & \num{0.357}        & \num{0.708}        & \num{0.562}
         & \num{0.460}                      & \num{0.809}          & \num{0.602}                & \num{0.459}        & \num{0.839}        & \num{0.601}
        \\
        \bottomrule
    \end{tabular}
    \caption{Document ranking performance. Latency is reported per query for $k_S = 5000$ on \gpu{GPU} and \cpu{CPU}. The coalesced \fastforward{} indexes are compressed to approximately 25\% of their original size. Hybrid retrievers use a dense retrieval depth of $k_D = 1000$. Superscripts indicate statistically significant improvements using two-paired tests with a sig.\ level of \num{95}\%~\cite{paired_significance_test}.}
    \label{tab:results.rerank.model_eval_doc}
\end{sidewaystable}
\begin{table*}
    \centering
    \begin{tabular}{lr@{\hspace{0.5\tabcolsep}}c@{\hspace{0.5\tabcolsep}}lcccccc}
        \toprule
         & \multicolumn{3}{c}{Latency}
         & \multicolumn{2}{c}{$k_S = 1000$}
         & \multicolumn{2}{c}{$k_S = 5000$}
        \\
        \cmidrule(lr){2-4}
        \cmidrule(lr){5-6}
        \cmidrule(lr){7-8}
         & \multicolumn{3}{c}{ms}
         & $\text{AP}_\text{1k}$            & $\text{RR}_\text{10}$
         & $\text{AP}_\text{1k}$            & $\text{RR}_\text{10}$
        \\
        \midrule
        \multicolumn{8}{l}{\bfseries \hybrid}                                                                 \\
        \bm, \tct
         & \gpu{\num{0}}                    & +                     & \cpu{\num{307}}
         & \num{0.434}                      & \num{0.894}           & \num{0.454}        & \num{0.902}
        \\
        \bm, \ance
         & \gpu{\num{0}}                    & +                     & \cpu{\num{307}}
         & \num{0.410}                      & \num{0.856}           & \num{0.422}        & \num{0.864}
        \\
        \midrule
        \multicolumn{8}{l}{\bfseries \reranking}                                                              \\
        \tct
         & \gpu{\num{186}}                  & +                     & \cpu{\num{2}}
         & \num{0.426}                      & \num{0.827}           & \num{0.439}        & \num{0.842}
        \\
        \ance
         & \gpu{\num{186}}                  & +                     & \cpu{\num{2}}
         & \num{0.389}                      & \num{0.836}           & \num{0.392}        & \num{0.857}
        \\
        \bertcls
         & \gpu{\num{185}}                  & +                     & \cpu{\num{2}}
         & \num{0.353}                      & \num{0.715}           & \num{0.275}        & \num{0.576}
        \\
        \midrule
        \multicolumn{8}{l}{\bfseries \interpolatedreranking}                                                  \\
        \midrulesep
        \tct
         & \gpu{\num{186}}                  & +                     & \cpu{\num{14}}
         & \hide{\num{0.438}}               & \hide{\num{0.894}}    & \hide{\num{0.460}} & \hide{\num{0.902}}
        \\
        \tablearrow \fastforward
         & \gpu{\num{0}}                    & +                     & \cpu{\num{114}}
         & \num{0.438}                      & \num{0.894}           & \num{0.460}        & \num{0.902}
        \\
        \quad \tablearrow early stopping
         & \gpu{\num{0}}                    & +                     & \cpu{\num{72}}
         & -                                & \num{0.894}           & -                  & \num{0.902}
        \\
        \midrulesep
        \ance
         & \gpu{\num{186}}                  & +                     & \cpu{\num{14}}
         & \hide{\num{0.417}}               & \hide{\num{0.856}}    & \hide{\num{0.435}} & \hide{\num{0.864}}
        \\
        \tablearrow \fastforward
         & \gpu{\num{0}}                    & +                     & \cpu{\num{114}}
         & \num{0.417}                      & \num{0.856}           & \num{0.435}        & \num{0.864}
        \\
        \quad \tablearrow early stopping
         & \gpu{\num{0}}                    & +                     & \cpu{\num{52}}
         & -                                & \num{0.856}           & -                  & \num{0.864}
        \\
        \midrulesep
        \bertcls
         & \gpu{\num{185}}                  & +                     & \cpu{\num{14}}
         & \num{0.378}                      & \num{0.809}           & \num{0.392}        & \num{0.832}
        \\
        \bottomrule
    \end{tabular}
    \caption{Ranking performance on \trecdlpsgn{}. Latency is reported per query for $k_S = 5000$ on \gpu{GPU} and \cpu{CPU}. Hybrid retrievers use a dense retrieval depth of $k_D = 1000$.}
    \label{tab:results.rerank.model_eval_passage}
\end{table*}
\Cref{tab:results.rerank.model_eval_doc,tab:results.rerank.model_eval_passage} show results of re-ranking, hybrid retrieval and interpolation on document and passage datasets, respectively. The metrics are computed for two sparse retrieval depths, $k_S = 1000$ and $k_S = 5000$.

We observe that additionally taking the sparse component into account in the score computation (as is done by the interpolation and hybrid methods) causes performance to improve with retrieval depth. Specifically, some queries receive a considerable recall boost, capturing more relevant documents with large retrieval depths. Interpolation based on \fastforward{} indexes achieves substantially lower latency compared to other methods. Pre-computing the document representations allows for fast look-ups during retrieval time. As only the query needs to be encoded by the dense model, both retrieval and re-ranking can be performed on the CPU while still offering considerable improvements in query processing time. Note that for \bertcls{}, the input length is limited, causing documents to be truncated, similarly to the \emph{firstP} approach. As a result, the latency is much lower, but in turn the performance suffers. It is important to note here, that, in principle, \fastforward{} indexes can also be used in combination with firstP models.

The hybrid retrieval strategy, as described in \cref{sec:prelims.hybrid_retrieval}, shows good performance. However, as the dense indexes require nearest neighbor search for retrieval, the query processing latency is much higher than for interpolation using \fastforward{} indexes.

Finally, dense re-rankers do not profit reliably from increased sparse retrieval depth; on the contrary, the performance drops in some cases. This trend is more apparent for the document retrieval datasets with higher values of $k_S$. We hypothesize that dense rankers only focus on semantic matching and are sensitive to topic drift, causing them to rank irrelevant documents in the top-\num{5000} higher.

\subsubsection{Varying the First-Stage Retrieval Model}
\begin{table*}
    \centering
    \small
    \begin{tabular}{lccccccccc}
        \toprule
                                       & Latency
                                       & \multicolumn{2}{c}{\msmpsgdev}
                                       & \multicolumn{3}{c}{\trecdlpsgn}
                                       & \multicolumn{3}{c}{\trecdlpsgt}                                                   \\
        \cmidrule(lr){2-2}
        \cmidrule(lr){3-4}
        \cmidrule(lr){5-7}
        \cmidrule(lr){8-10}
                                       & ms
                                       & $\text{AP}_\text{1k}$           & $\text{RR}_\text{10}$
                                       & $\text{AP}_\text{1k}$           & $\text{RR}_\text{10}$ & $\text{nDCG}_\text{10}$
                                       & $\text{AP}_\text{1k}$           & $\text{RR}_\text{10}$ & $\text{nDCG}_\text{10}$ \\
        \midrule
        \bm                            & \cpu{\num{14}}
                                       & \num{0.196}                     & \num{0.187}
                                       & \num{0.301}                     & \num{0.702}           & \num{0.506}
                                       & \num{0.288}                     & \num{0.655}           & \num{0.488}             \\
        \tablearrow \tildetwo          & \cpu{104}
                                       & \num{0.338}                     & \num{0.342}
                                       & \num{0.437}                     & \num{0.836}           & \num{0.680}
                                       & \num{0.459}                     & \num{0.868}           & \num{0.679}             \\
        \tablearrow \aggr                                                                                                  \\
        \quad \tablearrow \fastforward & \cpu{\num{150}}
                                       & \num{0.373}                     & \num{0.369}
                                       & \num{0.465}                     & \num{0.877}           & \num{0.700}
                                       & \num{0.486}                     & \num{0.825}           & \num{0.717}             \\
        \midrule
        \spade{} ($k=5$)               & -
                                       & -                               & \num{0.355}
                                       & \num{0.437}                     & -                     & \num{0.682}
                                       & \num{0.453}                     & -                     & \num{0.677}             \\
        \midrule
        \splade                        & \cpu{\num{302}}
                                       & \num{0.375}                     & \num{0.368}
                                       & \num{0.485}                     & \num{0.901}           & \num{0.728}
                                       & \num{0.490}                     & \num{0.830}           & \num{0.711}             \\
        \tablearrow \tildetwo          & \cpu{\num{374}}
                                       & \num{0.337}                     & \num{0.342}
                                       & \num{0.412}                     & \num{0.808}           & \num{0.654}
                                       & \num{0.433}                     & \num{0.858}           & \num{0.648}             \\
        \tablearrow \aggr                                                                                                  \\
        \quad \tablearrow \fastforward & \cpu{\num{420}}
                                       & \num{0.383}                     & \num{0.378}
                                       & \num{0.489}                     & \num{0.899}           & \num{0.726}
                                       & \num{0.500}                     & \num{0.856}           & \num{0.716}             \\
        \bottomrule
    \end{tabular}
    \caption{Passage ranking performance using various first-stage retrieval models as well as re-rankers. \aggr{} models are used for interpolation-based re-ranking using \fastforward{} indexes. Re-ranking is done with $k_S = 5000$ passages. \spade{} results are taken from the corresponding paper~\cite{choi2022spade}. For \splade{}, we use the \dsplade{} model. Latency is reported per query on \cpu{CPU}. For retrieval models (\bm{} and \splade{}), latency is reported at retrieval depth $k_S = 1000$. For re-ranking (\tildetwo{} and \fastforward{}), latency is reported as the sum of retrieval and re-ranking, both at depth $k_S = 5000$.}
    \label{tab:results.rerank.first_stage}
\end{table*}
We perform additional passage ranking experiments in \cref{tab:results.rerank.first_stage}, where we compare various first-stage retrieval methods in combination with re-rankers. The idea is to show how \fastforward{} indexes perform in combination with modern sparse retrievers and how they compare with other re-rankers. Additionally, these experiments give an idea of the \emph{end-to-end} efficiency, as we report the latency as the sum of retrieval, re-ranking, and tokenization. The \aggr{} model~\cite{lin2023aggretriever} we use in combination with \fastforward{} indexes is a recent single-vector dual-encoder model based on \cocondenser{}~\cite{gao2022unsupervised}.

Both \spade{} and \splade{}, unsurprisingly, perform substantially better than \bm{}, as these models use contextualized learnt representations. This boost in performance comes with a large increase in latency, in terms of both indexing and query processing. However, it becomes evident that re-ranking \bm{} results comes very close to these models in terms of performance, and sometimes even surpasses them, even though the overall latency remains lower. At the same time, \fastforward{} indexes manage to improve the performance of \splade{} by re-ranking (although the improvements are not as big). Interestingly, \tildetwo{} does not exhibit this behavior, but rather performs worse when a \splade{} first-stage retriever is used. We assume that the reason for this is that the model was not optimized for this scenario.

\subsection{Can the re-ranking efficiency be improved by limiting the number of \fastforward{} look-ups?}
\label{sec:results.lookups}
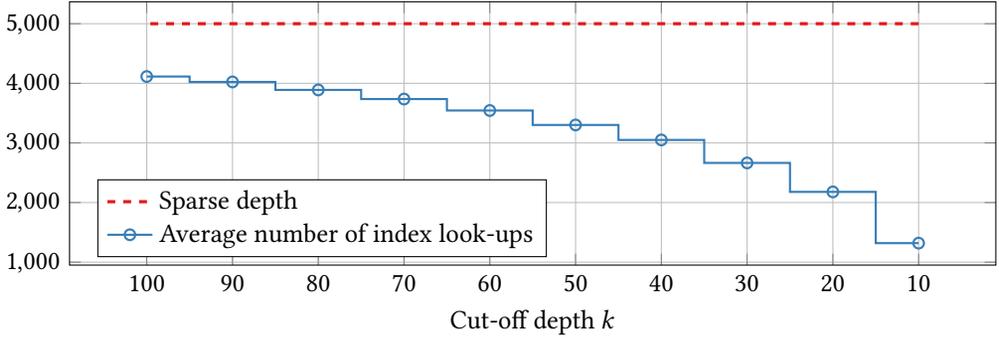
\begin{figure}
    \centering
    \begin{tikzpicture}
    \pgfplotstableread[col sep=comma]{plots/early_stopping/early_stopping.csv}\data
    \begin{axis}[
            width=\textwidth,
            height=0.25\textheight,
            xlabel={Cut-off depth $k$},
            grid=major,
            x dir=reverse,
            legend entries={
                    Sparse depth,
                    Average number of index look-ups
                },
            legend cell align={left},
            legend pos=south west,
        ]
        \addplot[
            domain=10:100,
            samples=2,
            dashed,
            mark=none,
            very thick,
            plotColor1,
        ] {5000};
        \addplot+[
            const plot mark mid,
            thick,
            mark=o,
            plotColor2,
        ] table[y index=1] {\data};
    \end{axis}
\end{tikzpicture}
    \caption{The average number of \fastforward{} index look-ups per query for interpolation with early stopping at varying cut-off depths $k$ on \trecdlpsgn{} with $k_S = 5000$ using \ance{}.}
    \label{fig:results.lookups.early_stopping}
\end{figure}
We evaluate the utility of the early stopping approach described in \cref{sec:ff_indexes.early_stopping} on the \trecdlpsgn{} dataset. \Cref{fig:results.lookups.early_stopping} shows the average number of look-ups performed in the \fastforward{} index during interpolation w.r.t.\ the cut-off depth $k$. We observe that, for $k = 100$, early stopping already leads to a reduction of almost \num{20}\% in the number of look-ups. Decreasing $k$ further leads to a significant reduction of look-ups, resulting in improved query processing latency. As lower cut-off depths (i.e., $k < 100$) are typically used in downstream tasks, such as question answering, the early stopping approach for low values of $k$ turns out to be particularly helpful.

\Cref{tab:results.rerank.model_eval_passage} shows early stopping applied to the passage dataset to retrieve the top-\num{10} passages and compute reciprocal rank. It is evident that, even though the algorithm approximates the maximum dense score (cf.\ \cref{sec:ff_indexes.early_stopping}), the resulting performance is identical, which means that the approximation was accurate in both cases and did not incur any performance hit. Furthermore, the query processing time is decreased by up to a half compared to standard interpolation. This means that presenting a small number top results (as is common in many downstream tasks) can yield substantial speed-ups. Note that early stopping depends on the value of $\alpha$, hence the latency varies between \tct{} and \ance{}.

\subsection{To what extent does query encoder complexity affect re-ranking performance?}
\label{sec:results.query_enc}
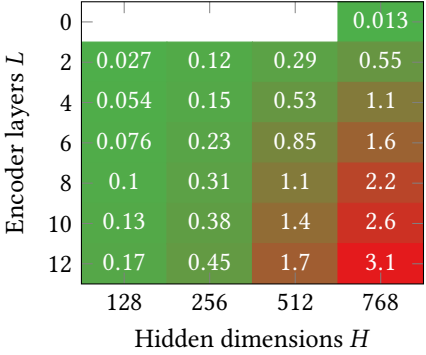
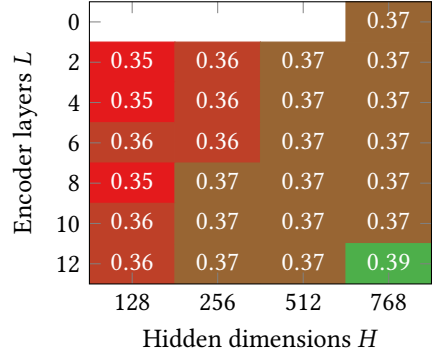
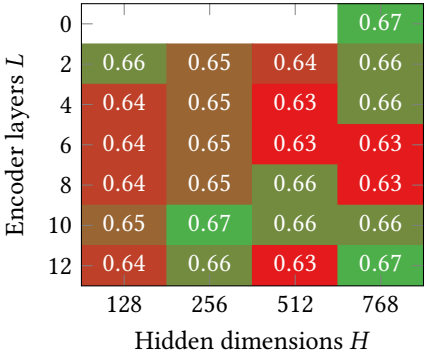
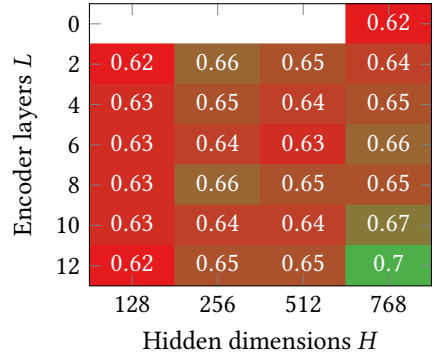
\begin{figure*}
  \begin{subfigure}{.49\linewidth}
    \centering
    \begin{tikzpicture}
    \begin{axis}[
            width=0.9\textwidth,
            enlargelimits=false,
            colormap name={greenred},
            xlabel={Hidden dimensions $H$},
            xtick={0,1,2,3},
            xticklabels={128,256,512,768},
            ylabel={Encoder layers $L$},
            ytick={0,1,2,3,4,5,6},
            yticklabels={0,2,4,6,8,10,12},
        ]
        \addplot [
            matrix plot,
            nodes near coords style={
                    anchor=mid,
                    text=white,
                },
            nodes near coords={
                    \ifthenelse{
                        \equal{\pgfplotspointmetatransformed}{0}}
                    {}
                    {\pgfmathprintnumber[precision=4]{\pgfplotspointmeta}}
                },
            point meta=explicit,
        ] coordinates {
                (0,0) [nan] (1,0) [nan] (2,0) [nan] (3,0) [0.013]

                (0,1) [0.027] (1,1) [0.12] (2,1) [0.29] (3,1) [0.55]

                (0,2) [0.054] (1,2) [0.15] (2,2) [0.53] (3,2) [1.1]

                (0,3) [0.076] (1,3) [0.23] (2,3) [0.85] (3,3) [1.6]

                (0,4) [0.1] (1,4) [0.31] (2,4) [1.1] (3,4) [2.2]

                (0,5) [0.13] (1,5) [0.38] (2,5) [1.4] (3,5) [2.6]

                (0,6) [0.17] (1,6) [0.45] (2,6) [1.7] (3,6) [3.1]
            };
    \end{axis}
\end{tikzpicture}
    \caption{Query encoding latency in seconds}
    \label{fig:results.query_enc.psg_results.latency_dev}
  \end{subfigure}
  \hfill
  \begin{subfigure}{.49\linewidth}
    \centering
    \begin{tikzpicture}
    \begin{axis}[
            width=0.9\textwidth,
            enlargelimits=false,
            colormap name={redgreen},
            xlabel={Hidden dimensions $H$},
            xtick={0,1,2,3},
            xticklabels={128,256,512,768},
            ylabel={Encoder layers $L$},
            ytick={0,1,2,3,4,5,6},
            yticklabels={0,2,4,6,8,10,12},
        ]
        \addplot [
            matrix plot,
            nodes near coords style={
                    anchor=mid,
                    text=white,
                },
            nodes near coords={
                    \ifthenelse{
                        \equal{\pgfplotspointmetatransformed}{0}}
                    {}
                    {\pgfmathprintnumber[precision=4]{\pgfplotspointmeta}}
                },
            point meta=explicit,
        ] coordinates {
                (0,0) [nan] (1,0) [nan] (2,0) [nan] (3,0) [0.37]

                (0,1) [0.35] (1,1) [0.36] (2,1) [0.37] (3,1) [0.37]

                (0,2) [0.35] (1,2) [0.36] (2,2) [0.37] (3,2) [0.37]

                (0,3) [0.36] (1,3) [0.36] (2,3) [0.37] (3,3) [0.37]

                (0,4) [0.35] (1,4) [0.37] (2,4) [0.37] (3,4) [0.37]

                (0,5) [0.36] (1,5) [0.37] (2,5) [0.37] (3,5) [0.37]

                (0,6) [0.36] (1,6) [0.37] (2,6) [0.37] (3,6) [0.39]
            };
    \end{axis}
\end{tikzpicture}
    \caption{nDCG@\num{10} on \msmpsgdev{}}
    \label{fig:results.query_enc.psg_results.ndcg_dev}
  \end{subfigure}
  \par\bigskip\bigskip
  \begin{subfigure}{.49\linewidth}
    \centering
    \begin{tikzpicture}
    \begin{axis}[
            width=0.9\textwidth,
            enlargelimits=false,
            colormap name={redgreen},
            xlabel={Hidden dimensions $H$},
            xtick={0,1,2,3},
            xticklabels={128,256,512,768},
            ylabel={Encoder layers $L$},
            ytick={0,1,2,3,4,5,6},
            yticklabels={0,2,4,6,8,10,12},
        ]
        \addplot [
            matrix plot,
            nodes near coords style={
                    anchor=mid,
                    text=white,
                },
            nodes near coords={
                    \ifthenelse{
                        \equal{\pgfplotspointmetatransformed}{0}}
                    {}
                    {\pgfmathprintnumber[precision=4]{\pgfplotspointmeta}}
                },
            point meta=explicit,
        ] coordinates {
                (0,0) [nan] (1,0) [nan] (2,0) [nan] (3,0) [0.67]

                (0,1) [0.66] (1,1) [0.65] (2,1) [0.64] (3,1) [0.66]

                (0,2) [0.64] (1,2) [0.65] (2,2) [0.63] (3,2) [0.66]

                (0,3) [0.64] (1,3) [0.65] (2,3) [0.63] (3,3) [0.63]

                (0,4) [0.64] (1,4) [0.65] (2,4) [0.66] (3,4) [0.63]

                (0,5) [0.65] (1,5) [0.67] (2,5) [0.66] (3,5) [0.66]

                (0,6) [0.64] (1,6) [0.66] (2,6) [0.63] (3,6) [0.67]
            };
    \end{axis}
\end{tikzpicture}
    \caption{nDCG@\num{10} on \trecdlpsgn{}}
    \label{fig:results.query_enc.psg_results.ndcg_psg19}
  \end{subfigure}
  \hfill
  \begin{subfigure}{.49\linewidth}
    \centering
    \begin{tikzpicture}
    \begin{axis}[
            width=0.9\textwidth,
            enlargelimits=false,
            colormap name={redgreen},
            xlabel={Hidden dimensions $H$},
            xtick={0,1,2,3},
            xticklabels={128,256,512,768},
            ylabel={Encoder layers $L$},
            ytick={0,1,2,3,4,5,6},
            yticklabels={0,2,4,6,8,10,12},
        ]
        \addplot [
            matrix plot,
            nodes near coords style={
                    anchor=mid,
                    text=white,
                },
            nodes near coords={
                    \ifthenelse{
                        \equal{\pgfplotspointmetatransformed}{0}}
                    {}
                    {\pgfmathprintnumber[precision=4]{\pgfplotspointmeta}}
                },
            point meta=explicit,
        ] coordinates {
                (0,0) [nan] (1,0) [nan] (2,0) [nan] (3,0) [0.62]

                (0,1) [0.62] (1,1) [0.66] (2,1) [0.65] (3,1) [0.64]

                (0,2) [0.63] (1,2) [0.65] (2,2) [0.64] (3,2) [0.65]

                (0,3) [0.63] (1,3) [0.64] (2,3) [0.63] (3,3) [0.66]

                (0,4) [0.63] (1,4) [0.66] (2,4) [0.65] (3,4) [0.65]

                (0,5) [0.63] (1,5) [0.64] (2,5) [0.64] (3,5) [0.67]

                (0,6) [0.62] (1,6) [0.65] (2,6) [0.65] (3,6) [0.7]
            };
    \end{axis}
\end{tikzpicture}
    \caption{nDCG@\num{10} on \trecdlpsgt{}}
    \label{fig:results.query_enc.psg_results.ndcg_psg20}
  \end{subfigure}
  \caption{Query encoding latency and \fastforward{} ranking performance of dual-encoders with various query encoder models. The sparse retrieval depth is $k_S = 5000$. $L$ and $H$ correspond to the number of Transformer layers and dimensions of the hidden representations, respectively. $L=0$ corresponds to embedding-based query encoders, which are initialized with pre-trained token embeddings from \bertbase{}, and $L>0$ corresponds to attention-based query encoders, where the number of attention heads is $A=\frac{H}{64}$. The document encoder is a BERT model with \num{12} layers and \num{768}-dimensional representations in all cases. Query encoding latency is measured on \cpu{CPU} with a batch size of \num{256} queries from \msmpsgdev{} (tokenization cost is excluded, as it is identical for all models).}
  \label{fig:results.query_enc.psg_results}
\end{figure*}
In this section, we investigate the role of the query encoder in interpolation-based re-ranking using \fastforward{} indexes.

\subsubsection{The Role of Self-Attention}
First, we train a large number of dual-encoder models (as described in \cref{sec:setup.training}) and successively reduce the complexity of the query encoder. At the same time, we monitor the effects on performance and latency. The query encoders we analyze correspond to the \emph{attention-based query encoders} in \cref{sec:efficient_encoders.query.attn} and the \emph{embedding-based query encoders} in \cref{sec:efficient_encoders.query.embedding}. Since the embedding-based encoders are, technically speaking, a special case of the attention-based ones, we plot the results together in \cref{fig:results.query_enc.psg_results}. The document encoder we use is a \bertbase{} model, which has $L=12$ layers and $H=768$ hidden dimensions; it is the same across all experiments. For the query encoder, we start with \bertbase{} as well and reduce both the number of layers and hidden dimensions. All pre-trained BERT models we use for this experiment are provided by \citet{turc2019well}. If the output dimensions of the encoders do not match, we add a single linear layer to the query encoder (cf.\ \cref{sec:setup.training.architecture}).

\Cref{fig:results.query_enc.psg_results.latency_dev} illustrates the time each encoder requires to encode a batch of queries on a CPU; as expected, a reduction in either the number of layers or hidden dimensions has a positive impact on encoding latency, and the most lightweight attention-based model ($L=2$, $H=128$) is significantly faster than \bertbase{} (\num{27} milliseconds vs.\ \num{3.1} seconds). Furthermore, the complete omission of self-attention in the embedding-based encoder ($L=0$, $H=768$) results in even faster encoding (\num{13} milliseconds).

Next, we analyze to what extent the drastic reduction of complexity affects the ranking performance. \Cref{fig:results.query_enc.psg_results.ndcg_dev,fig:results.query_enc.psg_results.ndcg_psg19,fig:results.query_enc.psg_results.ndcg_psg20} show the corresponding \fastforward{} re-ranking performance on passage development and test sets. It is evident that the absolute difference in performance between the encoders is relatively low; this is especially true on \msmpsgdev{} and \trecdlpsgn{}. In fact, the embedding-based query encoder does not always yield worse performance than the attention-based encoders, specifically on \trecdlpsgn{}. On \trecdlpsgt{}, the highest absolute difference of \num{0.05} is the largest among the three datasets.

These results suggest that query encoders do not need to be overly complex; rather, in most cases, either considerably smaller attention-based or even embedding-based models can be used. The embedding-based encoders are particularly useful, since they are essentially a look-up table and hence require no forward pass other than computing the average of all token embeddings.

\subsection{What is the trade-off between \fastforward{} index size and ranking performance?}
\label{sec:results.index_size}
This research question investigates how index size influences ranking performance and latency. In detail, we reduce index size in two different ways: First, we apply sequential coalescing (cf.\ \cref{sec:ff_indexes.coalescing}) in order to reduce the \emph{number of vector representations} in the index. Second, we train query and encoders to output \emph{lower-dimensional vector representations}. Note that these methods are not mutually exclusive, but rather complementary.

\subsubsection{Sequential Coalescing}
In order to evaluate this approach, we first take the pre-trained \tct{} dense index of the MS MARCO corpus, apply sequential coalescing with varying values for $\delta$ and evaluate each resulting compressed index using the \trecdldocn{} test set.
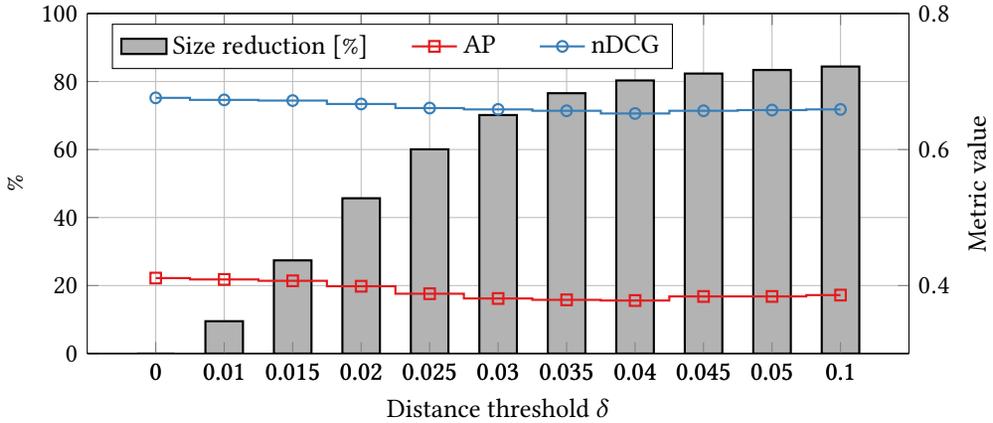
\begin{figure}
  \centering
  \begin{tikzpicture}
    \pgfplotstableread[col sep=comma]{plots/coalescing/coalescing.csv}\data
    \begin{axis}[
            ymin=0,
            ymax=100,
            axis y line*=left,
            bar width=0.5cm,
            width=0.9\textwidth,
            height=0.3\textheight,
            every axis plot/.append style={fill},
            xlabel={Distance threshold $\delta$},
            xtick=data,
            xticklabels from table={\data}{1},
            ylabel={\%},
            grid=major,
        ]
        \addplot+[
            ybar,
            area legend,
            plotColorNeutral*,
            thick,
            draw=black,
            mark=none,
        ] table[x index=0, y index=2] {\data}; \label{plot1}
    \end{axis}

    \begin{axis}[
            ymin=0.3,
            ymax=0.8,
            axis y line*=right,
            width=0.9\textwidth,
            height=0.3\textheight,
            xtick=data,
            xticklabels from table={\data}{1},
            ylabel={Metric value},
            legend pos=north west,
            legend columns=-1,
            legend style={/tikz/every even column/.append style={column sep=0.5cm}},
        ]
        \addlegendimage{/pgfplots/refstyle=plot1}
        \addlegendentry{Size reduction [\%]}
        \addplot+[
            const plot mark mid,
            thick,
            plotColor1,
            mark=square,
        ] table[x index=0, y index=4] {\data};
        \addlegendentry{AP}
        \addplot+[
            const plot mark mid,
            thick,
            plotColor2,
            mark=o,
        ] table[x index=0, y index=5] {\data};
        \addlegendentry{nDCG}
    \end{axis}
\end{tikzpicture}
  \caption{Sequential coalescing applied to \trecdldocn{}. The plot shows the index size reduction in terms of the number of passages and the corresponding metric values for \fastforward{} interpolation with \tct{}.}
  \label{fig:results.index_size.coalescing}
\end{figure}
The results are illustrated in \cref{fig:results.index_size.coalescing}. It is evident that, by combining the passage representations, the number of vectors in the index can be reduced by more than \num{80}\% in the most extreme case, where only a single vector per document remains. At the same time, the performance is correlated with the granularity of the representations. However, the drops are relatively small. For example, for $\delta = 0.025$, the index size is reduced by more than half, while the nDCG decreases by roughly \num{0.015} (\num{3}\%).

Additionally, \cref{tab:results.rerank.model_eval_doc} shows the detailed performance of coalesced \fastforward{} indexes on the document datasets. We chose the indexes corresponding to $\delta = 0.035$ (\tct{}) and $\delta = 0.003$ (\ance{}), both of which are compressed to approximately 25\% of their original size. This is reflected in the query processing latency, which is reduced by more than half. The overall performance drops to some extent, as expected, however, these drops are not statistically significant in all but one case. The trade-off between latency (index size) and performance can be controlled by varying the threshold $\delta$.

\subsubsection{The Effect of Representation Size}
\begin{figure*}
  \begin{subfigure}{.49\linewidth}
    \centering
    \begin{tikzpicture}
    \begin{axis}[
            ybar,
            width=0.9\textwidth,
            bar width=0.2,
            every axis plot/.append style={fill},
            grid=major,
            xtick={1, 2, 3, 4},
            xticklabels={$128$, $256$, $512$, $768$},
            xlabel={Hidden dimensions $H$},
            ylabel={nDCG@\num{10}},
            enlarge x limits=0.15,
            ytick={0.6, 0.65},
            ymin=0.575,
            ymax=0.7,
            area legend,
            legend entries={$L = 0$, $L = 12$},
            legend cell align={left},
            legend pos=north west,
            legend columns=-1,
            legend style={/tikz/every even column/.append style={column sep=0.5cm}},
        ]
        \addplot+[
            ybar,
            plotColor1*,
            draw=black,
            postaction={
                    pattern=north east lines
                },
        ] plot coordinates {
                (1,0.618)
                (2,0.632)
                (3,0.652)
                (4,0.669)
            };
        \addplot+[
            ybar,
            plotColor2*,
            draw=black,
            postaction={
                    pattern=north west lines
                },
        ] plot coordinates {
                (1,0.607)
                (2,0.654)
                (3,0.673)
                (4,0.668)
            };
    \end{axis}
\end{tikzpicture}
    \caption{Performance on \trecdlpsgn{}}
    \label{fig:results.index_size.dim.psg19}
  \end{subfigure}
  \begin{subfigure}{.49\linewidth}
    \centering
    \begin{tikzpicture}
    \begin{axis}[
            ybar,
            width=0.9\textwidth,
            bar width=0.2,
            every axis plot/.append style={fill},
            grid=major,
            xtick={1, 2, 3, 4},
            xticklabels={$128$, $256$, $512$, $768$},
            xlabel={Hidden dimensions $H$},
            ylabel={nDCG@\num{10}},
            enlarge x limits=0.15,
            ymin=0.45,
            ymax=0.75,
            area legend,
            legend entries={$L = 0$, $L = 12$},
            legend cell align={left},
            legend pos=north west,
            legend columns=-1,
            legend style={/tikz/every even column/.append style={column sep=0.5cm}},
        ]
        \addplot+[
            ybar,
            plotColor1*,
            draw=black,
            postaction={
                    pattern=north east lines
                },
        ] plot coordinates {
                (1,0.564)
                (2,0.617)
                (3,0.637)
                (4,0.623)
            };
        \addplot+[
            ybar,
            plotColor2*,
            draw=black,
            postaction={
                    pattern=north west lines
                },
        ] plot coordinates {
                (1,0.554)
                (2,0.656)
                (3,0.678)
                (4,0.701)
            };
    \end{axis}
\end{tikzpicture}
    \caption{Performance on \trecdlpsgt{}}
    \label{fig:results.index_size.dim.psg20}
  \end{subfigure}
  \caption{\fastforward{} ranking results for $k_S=5000$ of embedding-based ($L=0$) and attention-based ($L=12$) query encoders. The representation dimension $H$ is always the same for both encoders. The document encoders use $L=12$ layers and $A=\frac{H}{64}$ attention heads in all cases.}
  \label{fig:results.index_size.dim}
\end{figure*}
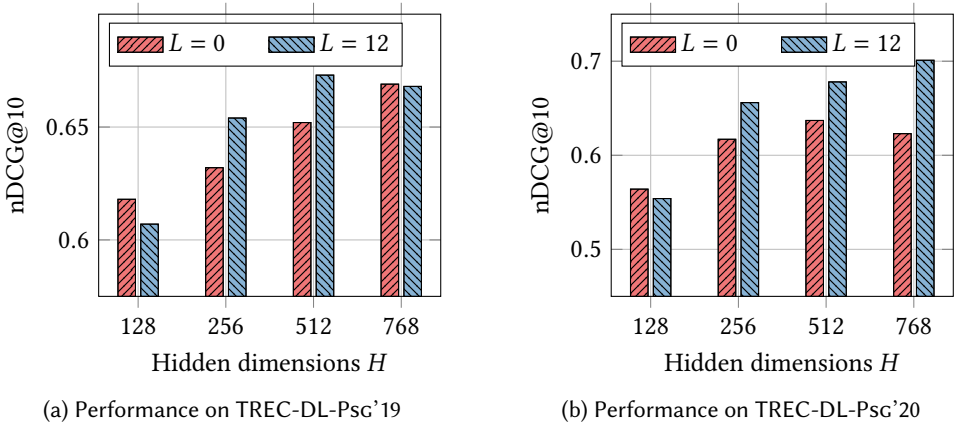
In this experiment, we investigate the degree to which the dimension of the query and document representations influences the final ranking performance of the models. The idea is motivated by recent research~\cite{ni2021large}, which suggests that the representation vectors are not the bottleneck of dual-encoder models, but rather the document encoder complexity is. Since the dimensionality of the representations directly influences the index size, it is desirable to keep it as low as possible.

In order to analyze the effect, we train a number of dual-encoder models (cf.\ \ref{sec:prelims.dual_encoders.training}), where all hyperparameters except the hidden dimension $H$ and number of attention heads $A$ are kept the same. We show results for embedding-based ($L=0$) and attention-based ($L=12$) query encoders in \cref{fig:results.index_size.dim}. There is a trade-off between the dimensionality of representations and ranking performance, which is expected; this trade-off is exhibited by both embedding-based and attention-based query encoders. Overall, the results show that the performance reduction is rather small for $H=512$ and even $H=256$ (compared to $H=768$), considering that it goes hand in hand with a reduction in index size of approximately \num{33}\% and \num{67}\%, respectively.

\subsection{Can the indexing efficiency be improved by removing irrelevant document tokens?}
\label{sec:results.indexing_efficiency}
\begin{figure*}
  \begin{subfigure}{.32\linewidth}
    \centering
    \begin{tikzpicture}
    \pgfplotstableread[col sep=comma]{plots/selbert/selbert.csv}\data
    \begin{axis}[
            width=\textwidth,
            grid=major,
            xlabel={Ratio $p$},
        ]
        \addplot[
            domain=0.1:0.9,
            samples=2,
            dashed,
            mark=none,
            thick,
            plotColor1,
        ] {0.8979388236};
        \addplot+[
            const plot mark mid,
            thick,
            mark=square,
            plotColor2,
        ] table[y index=1] {\data};
    \end{axis}
\end{tikzpicture}
    \caption{Encoding latency [sec]}
    \label{fig:results.indexing_efficiency.sel_bert.latency}
  \end{subfigure}
  \begin{subfigure}{.32\linewidth}
    \centering
    \begin{tikzpicture}
    \pgfplotstableread[col sep=comma]{plots/selbert/selbert.csv}\data
    \begin{axis}[
            width=\textwidth,
            grid=major,
            ytick={0.25,0.3,0.35},
            xlabel={Ratio $p$},
            legend entries={\bertbase{},\selbert{}},
            legend cell align={left},
            legend pos=south east,
            legend style={font=\tiny},
        ]
        \addplot[
            domain=0.1:0.9,
            samples=2,
            dashed,
            mark=none,
            thick,
            plotColor1,
        ] {0.3661710229};
        \addplot+[
            const plot mark mid,
            thick,
            mark=square,
            plotColor2,
        ] table[y index=2] {\data};
    \end{axis}
\end{tikzpicture}
    \caption{nDCG@\num{10}, $L = 0$}
    \label{fig:results.indexing_efficiency.sel_bert.ndcg_emb}
  \end{subfigure}
  \begin{subfigure}{.32\linewidth}
    \centering
    \begin{tikzpicture}
    \pgfplotstableread[col sep=comma]{plots/selbert/selbert.csv}\data
    \begin{axis}[
            width=\textwidth,
            grid=major,
            ytick={0.25,0.3,0.35},
            xlabel={Ratio $p$},
        ]
        \addplot[
            domain=0.1:0.9,
            samples=2,
            dashed,
            mark=none,
            thick,
            plotColor1,
        ] {0.3885345674};
        \addplot+[
            const plot mark mid,
            thick,
            mark=square,
            plotColor2,
        ] table[y index=3] {\data};
    \end{axis}
\end{tikzpicture}
    \caption{nDCG@\num{10}, $L = 12$}
    \label{fig:results.indexing_efficiency.sel_bert.ndcg_attn}
  \end{subfigure}
  \caption{Evaluation of \fastforward{} indexes created using \selbert{} models. The document encoders are \bertbase{} models with $L=12$ and $H=768$. During fine-tuning, we set the parameter $p=0.75$ (percentage of tokens to keep). We then vary $p \in [0, 1]$ during the indexing stage, resulting in progressively higher indexing efficiency (\cref{fig:results.indexing_efficiency.sel_bert.latency}). The corresponding \fastforward{} ranking performance on \msmpsgdev{} is shown in \cref{fig:results.indexing_efficiency.sel_bert.ndcg_emb} for an embedding-based query encoder ($L=0$) and in \cref{fig:results.indexing_efficiency.sel_bert.ndcg_attn} for an attention-based query encoder ($L=12$). Document encoding latency is measured on \gpu{GPU} with a batch size of \num{256} passages from the MS MARCO corpus (tokenization cost is excluded, as it is identical for all models).}
  \label{fig:results.indexing_efficiency.sel_bert}
\end{figure*}
In this experiment, we focus on the \selbert{} document encoders proposed in \cref{sec:efficient_encoders.doc}. In order to analyze the index efficiency and ranking performance, we train two dual-encoders (cf.\ \cref{sec:setup.training}) with \selbert{} document encoders, where $L=12$ and $H=768$. The query encoders have $L=0$ (embedding-based) and $L=12$ (attention-based), respectively, and $H=768$. During fine-tuning (cf.\ \cref{sec:efficient_encoders.doc.ft_inf}), we fix the hyperparameter $p=0.75$, which controls the ratio of tokens to be removed from the documents; afterwards, we create a number of indexes, where we vary $p$ between \num{0.1} and \num{0.9}, and compute the corresponding indexing time (using GPUs) and ranking performance. The results are plotted in \cref{fig:results.indexing_efficiency.sel_bert}.

The document encoding latency (\cref{fig:results.indexing_efficiency.sel_bert.latency}) increases nearly linearly with the ratio of tokens to keep ($p$). Even though the BERT model has a quadratic complexity w.r.t.\ input length, this is expected, as there is a certain amount of overhead introduced by the scoring network and the reconstruction of the batches. More interestingly, the ranking performance (\cref{fig:results.indexing_efficiency.sel_bert.ndcg_emb,fig:results.indexing_efficiency.sel_bert.ndcg_attn}) is mostly unchanged for $p \geq 0.5$ in both cases, however, neither models manage to match the performance of their respective baselines (the same configuration with a standard BERT model instead of \selbert{}). We hypothesize that the reason for this could be the choice of $p=0.75$ during the fine-tuning step.

Overall, our results show that up to \num{50}\% of document tokens can be removed without much of a performance reduction. Encoding half of the number of tokens results in approximately halving the time required to encode documents. This has a large impact on efficient index maintenance in the context of dynamically increasing document collections. For future work, the \selbert{} architecture can be further refined, for example, by introducing improved (contextualized) scoring networks.

\section{Discussion}
\label{sec:discussion}
In this section, we reflect upon our work and present possible limitations.

\subsection{Efficient Encoders for Dense Retrieval}
\label{sec:discussion.retrieval}
\begin{table*}
    \centering
    \begin{tabular}{lcccccc}
        \toprule
                        & \multicolumn{3}{c}{\trecdlpsgn}
                        & \multicolumn{3}{c}{\trecdldocn}
        \\
        \cmidrule(lr){2-4}
        \cmidrule(lr){5-7}
                        & $\text{AP}_\text{1k}$           & $\text{R}_\text{1k}$ & $\text{nDCG}_\text{10}$
                        & $\text{AP}_\text{1k}$           & $\text{R}_\text{1k}$ & $\text{nDCG}_\text{10}$
        \\
        \midrule
        \multicolumn{7}{l}{\bfseries \sparseretrieval}                                                                                               \\
        \bm             & \num{0.301}                     & \num{0.750}          & \num{0.506}             & \num{0.331} & \num{0.697} & \num{0.519} \\
        \midrule
        \multicolumn{7}{l}{\bfseries \denseretrieval}                                                                                                \\
        \ance           & -                               & -                    & \num{0.648}             & -           & -           & \num{0.628} \\
        \tct            & -                               & -                    & \num{0.670}             & -           & -           & -           \\
        \midrule
        \multicolumn{7}{l}{\bfseries \textsc{Our Models}}                                                                                            \\
        $L=0$, $H=768$  & \num{0.198}                     & \num{0.486}          & \num{0.424}             & \num{0.100} & \num{0.263} & \num{0.342} \\
        $L=12$, $H=768$ & \num{0.318}                     & \num{0.691}          & \num{0.545}             & \num{0.201} & \num{0.457} & \num{0.504} \\
        \bottomrule
    \end{tabular}
    \caption{Retrieval results of dual-encoder models using lightweight query encoders and some baselines. For \trecdldocn{}, the dense retrieval depth is set to $k_D=10000$ and maxP aggregation is applied (cf.\ \cref{eqn:prelims.interpolated_reranking.maxp}). Our model with $L=0$ uses an embedding-based query-encoder, and the one with $L=12$ uses an attention-based query encoder. The document encoder is a $\text{BERT}_\text{base}$ model ($L=12$, $H=768$) in both cases.}
    \label{tab:discussion.retrieval.results}
\end{table*}
Our research questions and experiments have focused exclusively on interpolation-based re-ranking using dual-encoders and \fastforward{} indexes. However, the most common application of dual-encoders in the field of IR is the use as dense retrieval models; a natural question that occurs is, whether the encoders proposed in \cref{sec:efficient_encoders} can be used for more efficient dense retrieval.

In \cref{tab:discussion.retrieval.results}, we present passage and document retrieval results on the MS MARCO corpus. Dense retrievers use a \faiss{}~\cite{johnson2021billion} vector index; no interpolation or re-ranking is performed. It is immediately obvious that our models do not achieve competitive results; on the contrary, the embedding-based encoder yields far worse performance than dense retrievers and even \bm{}, and even the attention-based encoder fails to improve over sparse retrieval.

From these results, we infer that the models we trained are not suitable for dense retrieval. However, we assume that the main reason for this is not the architecture of the query encoder, but instead the following:
\begin{itemize}
    \item We use a simple in-batch negative sampling strategy~\cite{karpukhin2020dense}, which has been shown to be less effective than more involved strategies~\cite{zhan2021optimizing,xiong2021approximate,lin2020distilling,lindgren2021efficient}.
    \item The hardware we use for training the models is limiting w.r.t.\ the batch size and thus the number of negative samples, i.e., we cannot use a batch size greater than 4.
    \item We perform validation and early stopping based on re-ranking.
\end{itemize}
Considering the points above, we expect that our dual-encoder models, including ones with lightweight encoders, could also be used in retrieval settings if the shortcomings of the training setup are addressed, for example, by using more powerful hardware and state-of-the-art training approaches. On the other hand, we argue that the fact that our models perform well in the re-ranking setting (see \cref{sec:results}) shows that it is both easier and more efficient (in terms of time and resources) to train models to be used with \fastforward{} indexes instead of for dense retrieval.

\subsection{Out-of-Domain Performance}
\label{sec:discussion.ood}
\begin{table*}
    \centering
    \begin{tabular}{lccc}
        \toprule
                     &             & \multicolumn{2}{c}{\fastforward}                   \\
        \cmidrule(lr){3-4}
                     & \bm         & $L=0$, $H=768$                   & $L=12$, $H=768$ \\
        \midrule
        \beirmsm     & \num{0.477} & \num{0. 653}                     & \num{0.677}     \\
        \beirfever   & \num{0.649} & \num{0. 715}                     & \num{0.777}     \\
        \beirfiqa    & \num{0.254} & \num{0. 282}                     & \num{0.313}     \\
        \beirquora   & \num{0.808} & \num{0. 761}                     & \num{0.804}     \\
        \beirhpqa    & \num{0.602} & \num{0. 628}                     & \num{0.674}     \\
        \beirdbp     & \num{0.320} & \num{0. 331}                     & \num{0.393}     \\
        \beirscifact & \num{0.691} & \num{0. 676}                     & \num{0.698}     \\
        \beirnfc     & \num{0.327} & \num{0. 327}                     & \num{0.330}     \\
        \bottomrule
    \end{tabular}
    \caption{Zero-shot ranking results on BEIR datasets (nDCG@\num{10}) using embedding-based ($L=0$) and attention-based ($L=12$) query encoders. The document encoder is a BERT model with \num{12} layers and \num{768}-dimensional representations. The sparse retrieval depth is $k_S = 5000$.}
    \label{tab:discussion.ood.beir}
\end{table*}
In the previous sections, we found that \fastforward{} indexes and lightweight query encoders show good performance in in-domain ranking tasks. This raises the question whether the models generalize well to out-of-domain tasks.

In order to ascertain the out-of-domain capabilities of our models, we evaluate them on a number of test sets from the BEIR benchmark. The evaluation happens in a zero-shot fashion, meaning that we use the same models as before and do not re-train them on the respective datasets. The results are shown in \cref{tab:discussion.ood.beir}. It is apparent that the attention-based query encoder yields better results than the embedding-based one in all cases, but the difference varies across datasets. Since both models were trained on MS MARCO, they perform well on the BEIR version of that dataset, as expected; notable differences in performance are observed on \beirfever{} and \beirdbp{}, however, both models manage to improve the \bm{} results. Finally, on \beirquora{}, \beirscifact{} and \beirnfc{}, re-ranking does not lead to a performance improvement, but rather fails to improve or even degrades the results. We assume that the corresponding tasks either require specific in-domain knowledge of the model or would benefit greatly from query-document attention (cross-attention).

\subsection{Threats to Validity}
\label{sec:discussion.threats}
In this section, we outline and discuss certain aspects of the experimental evaluation in this article which result in possible threats to the validity of the results.

\subsubsection{Performance of \bertcls{}}
\label{sec:discussion.threats.bert}
In \cref{tab:results.rerank.model_eval_doc,tab:results.rerank.model_eval_passage}, we report the performance of dual-encoder ranking models, along with a cross-attention model (\bertcls{}). We found that \bertcls{} performed notably worse, especially when the sparse retrieval depth $k_S$ is increased. This result is unexpected, especially considering the fact that the cross-attention architecture allows for query-document attention.

In addition to the architecture itself, the models differ in the way they are trained: \ance{} and \tct{} use complex distillation and negative sampling approaches, along with contrastive loss functions (cf.~\cref{eq:prelims.dual_encoders.training.loss}), while \bertcls{} is trained using simple pairwise loss. It is thus reasonable to assume that the negative sampling approach has a positive impact on the performance. Specifically, the contrastive loss trains the models to identify relevant documents among a very large number of irrelevant documents, while the pairwise loss focuses on re-ranking mostly related documents, which could explain the performance drop for higher retrieval depths.

Furthermore, it is important to note that, even if \bertcls{} performed similarly to the dual-encoder models, the difference in efficiency would remain the same, leaving the claims we make unaffected.

\subsubsection{Latency Measurements}
\label{sec:discussion.threats.latency}
As \fastforward{} indexes aim at improving ranking efficiency, we mainly focus on the query processing latency, which is reported in \cref{tab:results.rerank.model_eval_doc,tab:results.rerank.model_eval_passage,tab:results.rerank.first_stage,fig:results.query_enc.psg_results}. As the experiments in the paper have been performed over a longer period of time, there have been slight changes with respect to, for example, hardware or implementations. Consequently, the numbers in latency might not be directly comparable \textbf{across experiments}. Thus, we made sure to make each experiment self-contained, such that these comparisons are not necessary; rather, our results highlight relative latency improvements \textbf{within} each experiment, where all measurements are comparable. In general, one should also keep in mind that latency can be heavily influenced by the way a method is implemented.

\subsubsection{Hybrid Retrieval Baselines}
\label{sec:discussion.threats.hybrid_retrieval}
In \cref{tab:results.rerank.model_eval_doc,tab:results.rerank.model_eval_passage}, we presented, along with the results of our own method, some hybrid retrieval baselines. \Cref{tab:setup.evaluation.encoders_indexes} shows the corresponding indexes that we used for the dense retrievers. It is important to note that those are \emph{brute-force} indexes, i.e., they perform exact $k$NN retrieval. It is thus to be expected that the latency of hybrid retrieval can be further reduced by employing approximate dense retrieval instead; this would likely go hand in hand with a reduction in performance though.

\section{Conclusion}
\label{sec:conclusion}
In this paper, we proposed \fastforward{} indexes, a simple yet effective and efficient look-up-based interpolation method that combines lexical and semantic ranking. \fastforward{} indexes are based on dense dual-encoder models, exploiting the fact that document representations can be pre-processed and stored, providing efficient access in constant time. Using interpolation, we observed increased performance compared to hybrid retrieval. Furthermore, we achieved improvements of up to \num{75}\% in memory footprint and query processing latency due to our optimization techniques, \emph{sequential coalescing} and \emph{early stopping}.

Moreover, we introduced efficient encoders for dual-encoder models: Embedding-based and lightweight attention-based query encoders can be used to compute query representations significantly faster without compromising performance too much. \selbert{} document encoders dynamically remove irrelevant tokens from input documents prior to indexing, reducing the document encoding latency by up to \num{50}\% and thus making index maintenance much faster.

Our method solely requires CPU computations for ranking, completely eliminating the need for expensive GPU-accelerated re-ranking.

\begin{acks}
  This work is supported by the European Union – Horizon 2020 Program under the scheme “INFRAIA-01-2018-2019 – Integrating Activities for Advanced Communities”, Grant Agreement n.871042, “SoBigData++: European Integrated Infrastructure for Social Mining and Big Data Analytics” (http://www.sobigdata.eu).

  This work is supported in part by the Science and Engineering Research Board, Department of Science and Technology, Government of India, under Project SRG/2022/001548 and Microsoft Academic Partnership Grant 2023 Agreement No.\ 7581365. Koustav Rudra is a recipient of the DST-INSPIRE Faculty Fellowship [DST/INSPIRE/04/2021/003055] in the year 2021 under Engineering Sciences.
\end{acks}

\bibliographystyle{ACM-Reference-Format}
\bibliography{references}

\end{document}